\documentclass[letterpaper, 10 pt, conference]{ieeeconf}

\IEEEoverridecommandlockouts                              

\usepackage{xcolor}
\usepackage{amsmath, amssymb, amsfonts, mathtools, mathrsfs}
\usepackage{dsfont}        
\usepackage{amsthm}        

\usepackage{algorithm}
\usepackage{algorithmic}
\newcommand{\INPUT}{\STATE \textbf{Input:}}
\newcommand{\OUTPUT}{\STATE \textbf{Output:}}

\usepackage{graphicx}
\usepackage[caption=false,font=footnotesize]{subfig}
\usepackage{epstopdf}      
\usepackage{makecell}      

\usepackage{textcomp}
\usepackage{xcolor}

\definecolor{algcomment}{rgb}{0.35,0.35,0.35}
\newcommand{\ALGCOMMENT}[1]{{\color{algcomment}\footnotesize\ (// #1)}}

\usepackage{setspace}
\usepackage{color,soul}    

\usepackage{cite}          
\usepackage{url}
\usepackage[
    colorlinks=true,
    linkcolor=blue,
    citecolor=green,
    urlcolor=magenta
]{hyperref}

\usepackage{comment}       

\DeclarePairedDelimiter{\norm}{\lVert}{\rVert}

\theoremstyle{definition}

\newtheorem{proposition}{Proposition}
\newtheorem{lemma}{Lemma}
\newtheorem{remark}{Remark}

\DeclareMathOperator{\blkdiag}{blkdiag}

\newcommand{\ie}{\textit{i.e.}}

\allowdisplaybreaks

\title{Reduced Order Observers for Monocular Visual Inertial Odometry} 
\author{Amel Abi, Mouaad Boughellaba, Miaomiao Wang and Abdelhamid Tayebi  
	\thanks{This work was supported by the Natural Sciences and Engineering Research Council of Canada (NSERC), under the grant RGPIN-2026-06037.}
	\thanks{
    A. Abi, M. Boughellaba and A. Tayebi are with the Department of Electrical and Computer Engineering, Lakehead University, Thunder Bay, ON P7B 5E1, Canada. (e-mail: {\tt\small aabi,mboughel,atayebi@lakeheadu.ca}). M. Wang is with the School of Artificial Intelligence and Automation, Huazhong University of Science and Technology, Wuhan 430074, China (e-mail: {\tt\small mmwang@hust.edu.cn})}%
}
\begin{document}
\maketitle
\raggedbottom



\begin{abstract}
This work develops a reduced order observer framework of monocular visual-inertial odometry (VIO). The key idea consists in considering the gravity vector in the body-frame as an additional state and constructing an appropriate output that relates the monocular bearing measurements and their time derivatives to the body-frame velocity. This eliminates the need for the estimation of the landmark positions, leading to a six-dimensional linear time-varying (LTV) observer for the body-frame velocity and gravity, endowed with uniform global exponential stability guarantees. These estimates are subsequently used to recover the orientation almost globally, up to an unknown constant yaw offset, and the position up to an unknown constant translation. The mixed-bearing framework is further extended to account for the inertial measurement unit (IMU) biases, yielding a twelve-dimensional LTV observer endowed with local exponential stability guarantees. Numerical simulations are provided to illustrate the effectiveness of the proposed observers.
\end{abstract}


\section{Introduction}

 The growing use of autonomous unmanned aerial vehicles (UAVs) in GPS-denied environments has intensified the need for robust onboard localization. Visual-inertial odometry, which fuses camera and IMU data, provides a viable solution to this requirement. Although
 VIO is widely adopted in applications such as augmented reality and virtual reality, as well as on smartphones, its deployment for aerial navigation is constrained by the limited onboard computational resources of UAVs. Current VIO methodologies typically involve a trade-off between computational efficiency and estimation accuracy. Extended Kalman Filter (EKF)-based approaches, such as MSCKF \cite{MSCKF2007}, OpenVINS \cite{OpenVINS}, and ROVIO \cite{ROVIO}, offer the computational efficiency required for real-time onboard operation. However, their reliance on local linearization makes them susceptible to linearization errors. In contrast, optimization-based methods such as OKVIS \cite{OKVIS} and VINS-Mono \cite{VINS-mono} mitigate these errors through iterative nonlinear optimization over a sliding window. This generally yields higher estimation accuracy, at the expense of increased computational cost.
 
 A key limitation of EKF-based approaches is filter inconsistency. The VIO problem possesses four unobservable degrees of freedom: the three components of global position and the rotation about the gravity direction (yaw).  Linearization can introduce spurious information along these unobservable directions, rendering the filter inconsistent \cite{Scaramuzza2020}. 
To address this source of inconsistency, invariant and equivariant observer designs have been proposed. These approaches exploit Lie group symmetries of the VIO  problem. The Invariant Extended Kalman Filter (IEKF) \cite{IEKF} utilizes the extended special Euclidean group $\mathrm{SE}_2(3)$ for the inertial states. Under this symmetry, the bias-free IMU dynamics are group-affine, yielding a linearization independent of the estimated trajectory. EqVIO \cite{Eqvio} extends this principle within the Equivariant Filter (EqF) framework by incorporating the Scaled Orthogonal Transformations group $\mathrm{SOT}(3)$ for visual landmarks, enabling a higher-order approximation of the visual measurement function while reducing linearization errors and improving filter consistency. While these methods address the inconsistency problem, their convergence guarantees remain local. Furthermore, these estimation frameworks require augmenting the state vector with visual landmark estimates, leading to computational complexity that grows with the number of tracked landmarks.
This motivates the need for a reduced-order estimator whose state dimension is independent of the number of tracked landmarks, while achieving strong (non-local) asymptotic stability guarantees. In \cite{Bouazza}, a reduced-order cascaded observer was developed for monocular VIO. The proposed observer does not incorporate bearing measurements directly in the output equation, but rather uses the direction of the body-frame linear velocity, which is extracted from optical flow data via a constrained minimization on the unit sphere. 

In this work, we propose a framework for pose and linear velocity estimation of a rigid body using monocular vision and inertial measurements. By augmenting the translational state with the gravity vector expressed in the body frame, the translational dynamics can be recast as a linear time-varying (LTV) system that does not explicitly depend on the attitude dynamics. In contrast to \cite{Bouazza}, our reduced-order observers employ a mixed-bearing strategy directly within the output equation, eliminating the need for an optimization step to estimate the direction of the body-frame linear velocity.
 

We propose three estimation schemes. The first employs an LTV observer evolving on $\mathbb{R}^{6+3N}$ to jointly estimate the body-frame linear velocity, gravity vector, and positions of $N$ landmarks, with uniform global exponential stability (UGES) guarantees. The estimated velocity and gravity vector are then used to estimate the rigid body's position and orientation.
In this formulation, the observer dimension grows with the number of tracked landmarks. To address this limitation, the second scheme employs a reduced-order observer evolving on $\mathbb{R}^6$, which uses mixed-bearing information to estimate the body-frame linear velocity and gravity vector without explicitly estimating landmark positions, while retaining UGES guarantees. These estimates are subsequently used to estimate the rigid body's position and orientation. Both schemes recover the orientation almost globally, up to an unknown constant yaw offset, and the position up to an unknown constant translation.
Finally, we extend the mixed-bearing formulation to account for gyroscope and accelerometer biases, resulting in an LTV observer on $\mathbb{R}^{12}$ with local exponential stability guarantees. As in the first two schemes, the estimated biases, body-frame linear velocity, and gravity vector are used to recover the rigid body's orientation up to an unknown constant yaw offset and its position up to an unknown constant translation.

\section{Notations and preliminaries}

The notation $\mathbb{R}$ represents the set of real numbers. Let $I_n$ and $0_{m\times n}$ refer to the $n\times n$ identity matrix and the $m\times n$ zero matrix, respectively. The column vector of ones is denoted by $\mathbf{1}_N \in \mathbb{R}^N$. Let $\norm{\cdot}$ represent the Euclidean norm. For $x\in\mathbb{R}^3$, $x^{\times}\in\mathbb{R}^{3\times 3}$ denotes the skew-symmetric matrix such that $x^\times y = x\times y$ for all $y\in\mathbb{R}^3$, where $\times$ indicates the cross product. For matrices $A$ and $B$, $A \otimes B$ represents their Kronecker product.
The special orthogonal group is defined as $\mathrm{SO}(3):=\{R\in\mathbb{R}^{3\times 3}\mid R^\top R=I_3,\det(R)=1\}$. The set of three-dimensional unit vectors is given by $\mathbb{S}^2:=\{
{x}\in\mathbb{R}^3\mid\|{x}\| = 1\}$. For any $x\in \mathbb{S}^2$, define the orthogonal projection operator $\pi:\mathbb{S}^2\rightarrow\mathbb{R}^{3\times3}$ by
\begin{equation}
    \pi(x) = I_3 - xx^\top.
\end{equation}
The operator $\pi(x)$ projects any vector in $\mathbb{R}^3$ onto the plane orthogonal to $x$. 

\subsection{Riccati equation and uniform observability}
Consider the following Riccati equation:
\begin{equation}\label{eq:Riccati_equation}
    \dot{P}= A(t)P+PA(t)^ \top - PC^\top( t) Q(t)C(t)P+ V(t),
\end{equation}
with $P(0)=P^\top(0)>0$, $A(t)$ and $C(t)$ are continuous and
\[
q_m I\le Q(t)\le q_M I,
\qquad
v_m I\le V(t)\le v_M I,
\]
for some strictly positive constants $v_m$, $v_M$, $q_m$ and $q_M$. Let $\Phi(t,t_0)$ denotes the state-transition matrix of the system $\dot{x}=A(t)x$.  Define the observability Gramian
 \begin{equation}\label{obs_gram}
 W_0(t,t+\delta)= \int_{t}^{t+\delta} \Phi^{\top}(\tau,t) C^{\top}(\tau) C(\tau) \Phi(\tau,t) d\tau,
 \end{equation}
 and 
 \begin{equation}
 W_c(t,t+\sigma)= \int_{t}^{t+\sigma} \Phi(t,\tau) \Phi^\top (t,\tau) d\tau.
\end{equation}
A simplified version of the results in \cite{Bucy1972}, in view of the properties imposed on $V(t)$ and $Q(t)$, is stated in the following lemma:
\begin{lemma}\label{lemma1}
If there exist positive constants $\delta$, $\sigma$, $\mu_o$, $\bar{\mu}_o$, $\mu_c$, $\bar{\mu}_c$ such that 
\begin{equation}\label{UCO}
\bar{\mu}_o I \geq  W_0(t,t+\delta)\ge \mu_o I,~ \forall t\geq 0,
\end{equation}
and 
\begin{equation}\label{UCOplus}
\bar{\mu}_c I \geq  W_c(t,t+\sigma)\ge \mu_c I, ~ \forall t\geq 0.
\end{equation}
Then the Riccati equation \eqref{eq:Riccati_equation} admits a solution $P(t)$ satisfying
\begin{equation}\label{Riccati_bounds}
P_m I\leq P(t)\leq P_M I,~~\forall t\geq 0,
\end{equation}
for some strictly positive scalars $P_m$ and $P_M$.
\end{lemma}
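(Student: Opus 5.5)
The plan is to obtain the upper and lower bounds separately, each through an optimal-control (cost-to-go) characterization of the Riccati solution. To handle $Q$ and $V$ I rescale them away. Write $Q(t)=\tilde Q(t)^\top \tilde Q(t)$ and $V(t)=G(t)G(t)^\top$ with $\tilde Q=Q^{1/2}$ and $G=V^{1/2}$. Because $q_mI\le Q\le q_MI$ and $v_mI\le V\le v_MI$, the Gramians formed with $\tilde C=\tilde Q C$ and with $G$ lie within constant multiples of $W_0$ and $W_c$. Condition \eqref{UCO} therefore transfers to the pair $(A,\tilde C)$, and condition \eqref{UCOplus} to the pair $(A,G)$, after multiplying the bounds by $q_m,q_M$ and $v_m,v_M$ respectively. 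I also note that $P(t)$ stays positive definite and exists for all $t\ge0$: the quadratic term only decreases $P$, and $P$ is bounded above by the solution of the linear Lyapunov equation $\dot X=AX+XA^\top+V$. Hence there is no finite escape time.

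\textbf{Upper bound.} I use the dual (control) interpretation. Write $\Pi(t)=P(t)$, and for fixed $t\ge\delta$ consider $x^\top P(t)x$. I would show that $x^\top P(t) x \le \int_{t-\delta}^{t}\big(|u|^2_{Q^{-1}}\text{-type cost}+\dots\big)$. Equivalently, in the filtering form, $P(t)$ is the error covariance of the Kalman filter. It is therefore bounded by the covariance of any other unbiased estimator, and I pick the one built only from the measurements on $[t-\delta,t]$ through the inverse observability Gramian. Concretely, the estimator $\hat x(t)=\Phi(t,t-\delta)W_0^{-1}(t-\delta,t)\int\Phi^\top C^\top Q\,y$ has covariance bounded by
\begin{equation*}
\Phi\,W_0^{-1}\Phi^\top\cdot c+\int_{t-\delta}^t\Phi(t,s)V\Phi^\top(t,s)\,ds\cdot c'.
\end{equation*}
This is bounded using $W_0\ge\mu_oI$, $W_c\le\bar\mu_cI$, and the bound on $\|\Phi(t,t-\delta)\|$. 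For $t\in[0,\delta]$, continuity together with the Lyapunov-equation comparison gives a bound depending on $P(0)$.

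The main obstacle is obtaining the bound on $\|\Phi(t,t-\delta)\|$ and its inverse from the hypotheses alone. My first attempt is to extract it from the two-sided Gramian bounds. The upper bound in \eqref{UCO} and the lower bound in \eqref{UCOplus} together constrain $\Phi$ on intervals: for instance, $\int_t^{t+\sigma}\Phi(t,\tau)\Phi^\top(t,\tau)\,d\tau\le\bar\mu_cI$ combined with the semigroup property gives uniform bounds on $\Phi(t,\tau)$ over windows. This is essentially Bucy's argument, and I would follow \cite{Bucy1972} at this step.

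\textbf{Lower bound.} I apply the same reasoning to $P^{-1}$. The inverse satisfies
\begin{equation*}
\frac{d}{dt}P^{-1}=-A^\top P^{-1}-P^{-1}A+C^\top QC-P^{-1}VP^{-1},
\end{equation*}
which is a Riccati equation for the dual system with the roles of $(C,Q)$ and $(G)$ interchanged. An upper bound on $P^{-1}$ then follows by the identical argument, now using $W_c\ge\mu_cI$ and $W_0\le\bar\mu_oI$. This gives $P\ge P_mI$, and combining the two bounds yields \eqref{Riccati_bounds}.
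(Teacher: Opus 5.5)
The paper gives no proof of this lemma. It quotes the result from \cite{Bucy1972}, and its only step is absorbing $Q$ and $V$ through their uniform bounds, which is exactly your first paragraph. The rest of your proposal reconstructs the classical Kalman--Bucy argument behind the citation: a one-window suboptimal estimator for the upper bound, and the same argument applied to $P^{-1}$ for the lower bound. That route is sound, and it shows why the Gramians are defined as they are. Since $\Phi_{-A^\top}(t,\tau)=\Phi^\top(\tau,t)$, the $V$-weighted $W_c$ is the observability Gramian of the dual pair $(-A^\top,V^{1/2})$. The $Q$-weighted $W_0$ is the controllability Gramian of $-A^\top$ with input matrix $C^\top Q^{1/2}$. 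So passing to $P^{-1}$ swaps the roles of \eqref{UCO} and \eqref{UCOplus}.

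Two details are worth adding. First, in the dual equation the ``process noise'' $C^\top QC$ is only semidefinite. This is harmless, because your upper-bound argument never uses a lower bound on that term. Second, in the primal upper bound, the constant in your displayed covariance bound must absorb the process noise propagated through your window estimator. That term's bound involves $\norm{W_0^{-1}}\,\mathrm{tr}(W_0)$, so this is where $W_0\le\bar\mu_o I$ enters; you list only $W_0\ge\mu_o I$.

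The step you defer, bounding $\Phi$ on windows, is the real content of Bucy's result, and the justification you sketch for it would not work. The bounds on $W_0$ cannot control $\Phi$, because $C$ may vanish on subintervals. Likewise, an integral upper bound combined with the semigroup property gives no pointwise bound. What works is the two-sided bound on $W_c$, whose input matrix is the identity.

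Fix $t_0$ and put $h(s)=\Phi(t_0,s)\Phi^\top(t_0,s)$ and $H(s)=\int_s^{s+\sigma}h(\tau)\,d\tau=\Phi(t_0,s)W_c(s,s+\sigma)\Phi^\top(t_0,s)$. Then $\mu_c h\le H\le\bar\mu_c h$, and $\dot H(s)=h(s+\sigma)-h(s)\ge -H(s)/\mu_c$. Hence $H(s)\ge e^{-(s-t_0)/\mu_c}H(t_0)\ge\mu_c e^{-(s-t_0)/\mu_c}I$. Combining this with $h\ge H/\bar\mu_c$ gives
\begin{equation*}
\norm{\Phi(s,t_0)}^2\le\frac{\bar\mu_c}{\mu_c}\,e^{(s-t_0)/\mu_c},\qquad s\ge t_0 .
\end{equation*}
Now take a base point $s\in[t_0,t_0+\sigma/2]$ and apply this bound inside $\int_s^{t_0+\sigma}\Phi(t_0,\tau)\Phi^\top(t_0,\tau)\,d\tau\le W_c(t_0,t_0+\sigma)\le\bar\mu_c I$. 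This bounds $\norm{\Phi(t_0,s)}$ uniformly. Chaining the two estimates then bounds $\Phi$ and $\Phi^{-1}$ on every window of fixed length, which is what both of your window estimators (primal and dual) require.
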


A simplified version of Lemma \ref{lemma1} in view of the uniform boundedness of $A(t)$ and $C(t)$ is given in the following lemma:
\begin{lemma}\label{lemma2}
If $A(t)$ and $C(t)$ are continuous and uniformly bounded and if there exist positive constants $\delta$ and $\mu$ such that 
\begin{equation}\label{UO}
W_0(t,t+\delta)\ge \mu I,~\forall t\geq 0.
\end{equation}
Then the Riccati equation \eqref{eq:Riccati_equation} admits a solution $P(t)$ satisfying \eqref{Riccati_bounds}.
\end{lemma}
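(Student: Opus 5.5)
To prove Lemma \ref{lemma2}, we reduce it to Lemma \ref{lemma1}. It suffices to show that, when $A(t)$ and $C(t)$ are continuous and uniformly bounded, the lower bound \eqref{UO} implies the two-sided bound \eqref{UCO}, and that \eqref{UCOplus} holds automatically. Lemma \ref{lemma1} then yields \eqref{Riccati_bounds} directly.

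The main tool is a standard Gr\"onwall estimate on the state-transition matrix. Let $\|A(t)\|\le a$ and $\|C(t)\|\le c$ for all $t\ge0$. From $\partial_\tau \Phi(\tau,t)=A(\tau)\Phi(\tau,t)$ and $\Phi(t,t)=I$, we get $\|\Phi(\tau,t)\|\le e^{a|\tau-t|}$. The same argument applied to $\partial_\tau\Phi(t,\tau)=-\Phi(t,\tau)A(\tau)$ gives $\|\Phi(t,\tau)\|\le e^{a|\tau-t|}$.

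\textbf{Upper bound in \eqref{UCO}.} Using the estimate above,
\[
W_0(t,t+\delta)\le \int_t^{t+\delta} c^2 e^{2a(\tau-t)}\,d\tau\, I \le \delta c^2 e^{2a\delta} I=:\bar\mu_o I.
\]
Together with \eqref{UO}, with $\mu_o=\mu$, this gives \eqref{UCO}.

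\textbf{Condition \eqref{UCOplus}.} The upper bound follows in the same way:
\[
W_c(t,t+\sigma)\le \sigma e^{2a\sigma} I,
\]
for any fixed $\sigma>0$, say $\sigma=\delta$. For the lower bound, note that $\Phi(t,\tau)=\Phi(\tau,t)^{-1}$. Hence
\[
\sigma_{\min}\bigl(\Phi(t,\tau)\bigr)=1/\|\Phi(\tau,t)\|\ge e^{-a(\tau-t)}.
\]
It follows that $\Phi(t,\tau)\Phi^\top(t,\tau)\ge e^{-2a(\tau-t)}I$, and integrating gives
\[
W_c(t,t+\sigma)\ge \frac{1-e^{-2a\sigma}}{2a}\, I=:\mu_c I.
\]
If $a=0$, the lower bound is simply $\sigma I$. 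All the constants obtained are independent of $t$.

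The only delicate point is this lower bound on $W_c$, which depends on the uniform invertibility of $\Phi$. The inverse relation $\Phi(t,\tau)=\Phi(\tau,t)^{-1}$ combined with the Gr\"onwall bound handles it cleanly. Continuity of $A$ and $C$ guarantees that $\Phi$ exists and that the integrals are well defined. With \eqref{UCO} and \eqref{UCOplus} established, Lemma \ref{lemma1} applies and concludes the proof.
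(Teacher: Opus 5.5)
Your proposal is correct and takes the paper's route. The paper gives no separate proof: it presents Lemma~\ref{lemma2} as the specialization of Lemma~\ref{lemma1} that follows from uniform boundedness of $A(t)$ and $C(t)$, which is exactly the reduction you carry out. Your Gr\"onwall bounds on $\Phi$ supply the details the paper leaves implicit, namely the upper bound on $W_0$ and both bounds on $W_c$, with the lower one coming from $\Phi(t,\tau)=\Phi(\tau,t)^{-1}$.
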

Condition \eqref{UCO}, which imposes uniform lower and upper bounds on the observability Gramian, is commonly referred to as the uniform complete observability condition \cite{Kalman1960,Bucy1972}. In contrast, the one-sided condition \eqref{UO}, requiring only a uniform positive lower bound on the observability Gramian, is often referred to as uniform observability in more recent literature; see, for instance, \cite{Morin2017,HAMEL2017}. 
The terminology, however, is not entirely consistent across the literature. In particular, some authors use the term uniform complete observability for the one-sided condition \eqref{UO}, see for instance, \cite{BATISTA2017}. This discrepancy is largely terminological rather than conceptual, since in many commonly considered settings the upper bound on the observability Gramian follows from additional regularity or boundedness assumptions on the system matrices. Nevertheless, when such an upper bound is not guaranteed a priori, the distinction between \eqref{UO} and \eqref{UCO} becomes mathematically relevant. For clarity, throughout this work we reserve the term uniform observability for \eqref{UO} and uniform complete observability for \eqref{UCO}.

\section{Problem Formulation}
 \subsection{System dynamics}   
Let $\{\mathcal{I}\}$ denote an inertial frame and $\{\mathcal{B}\}$ a body frame rigidly attached to the IMU. Let $R \in SO(3)$ be the rotation from frame $\{\mathcal{B}\}$ to frame $\{\mathcal{I}\}$, $p \in \mathbb{R}^3$ the position of the origin of $\{\mathcal{B}\}$, expressed in $\{\mathcal{I}\}$, and $v \in \mathbb{R}^3$ the linear velocity of the origin of $\{\mathcal{B}\}$  expressed in $\{\mathcal{B}\}$. 
The dynamics of a rigid body navigating in 3D space are described by
\begin{subequations}\label{eq:system_dynamics}
\begin{align}
    \dot{R} &= R\,\omega^\times \label{eq:system_dynamic_R} \\
    \dot{p} &= Rv \label{eq:system_dynamic_p}  \\
    \dot{v} &= -\omega^\times v + R^\top g + a, \label{eq:system_dynamic_v}  
\end{align}
\end{subequations}
where $\omega \in \mathbb{R}^3$ is the body-frame angular velocity of the rigid body, $ a \in \mathbb{R}^3$ is the apparent acceleration capturing all non-gravitational forces applied to the rigid body expressed in $\{\mathcal{B}\}$, and $g \in \mathbb{R}^3$ is the gravity vector in $\{\mathcal{I\}}$.

 \subsection{Monocular bearing measurements}
 Consider a set of $N$ static landmarks in the environment, denoted by $\mathcal{L} := \{ p_1,\ldots, p_N \}$, where $p_i \in \mathbb{R}^3$ is the constant inertial-frame position of the $i$-th landmark. The relative position of the $i$-th landmark with respect to the origin of  $\{\mathcal{B}\}$, expressed in $\{\mathcal{B}\}$, is given by 
 \begin{equation}\label{eq:Landmark_position_in_body_frame}
     y_i(t)= R^\top ( p_i -p(t)).
 \end{equation}

 Let the pair $(R_c,p_c)$ denote the constant pose of the camera frame $\{\mathcal{C}\}$ with respect to the body frame $\{\mathcal{B}\}$. Then, the relative position of the $i$-th landmark with respect to the origin of $\{\mathcal{C}\}$ expressed in $\{\mathcal{C}\}$ is given by
\begin{equation}\label{eq:Landmark_position_in_the_camera_frame}
    {}^\mathcal{C}y_i(t) = R_c^\top(y_i(t) -p_c).
\end{equation}
The monocular bearing measurements $b_i\in \mathbb{S}^2, i \in \mathbb{I}$, $\mathbb{I} : =\{{1,2, \dots, N}\}$, associated with the $i$-th landmark, are given by
\begin{equation}\label{eq:Landmark_bearing_vector}
    b_i (t)= \frac{{}^ \mathcal{C}y_i(t)}{\|{}^ \mathcal{C}y_i(t) \|} 
\end{equation}
\begin{remark}
    Let $z_i = [u_i, v_i, 1]^\top$ denote the measured homogeneous image coordinates of the $i$-th landmark. The corresponding bearing vector in the camera frame is computed as follows:
    \begin{equation}\label{eq:Landmark_bearing_obtained_from_camera_2D_coordinates}
        b_i = \frac{\mathcal{K}^{-1} z_i}{\|\mathcal{K}^{-1}z_i\|} \in \mathbb{S}^2,
    \end{equation}
where $ \mathcal{K}$ is the camera intrinsic matrix.
\end{remark}

\section{Monocular bearings based VIO}\label{section:mutli-bearings}
\subsection{ Global estimation of the linear velocity, gravity and landmark positions in the body-frame}
In this subsection, we recall the linear time-varying (LTV) observer, proposed in \cite{wangACC25}, that estimates the body-frame landmark positions $y_i,i \in \mathbb{I}$, the body-frame linear velocity $v$, and the body-frame gravity vector $\eta:=R^Tg$, with uniform global exponential stability guarantees. The observer is driven by inertial and visual measurements provided by the IMU and the camera, respectively. In \cite{wangACC25}, the rigid body pose is recovered by using the LTV observer estimates along with the positions, in the inertial frame, of at least three measured landmarks in the body frame. In this work, we use this LTV observer in the VIO context where we assume that the positions of the landmarks in the inertial frame are not available.

From \eqref{eq:system_dynamics}  and differentiating \eqref{eq:Landmark_position_in_body_frame} with respect to time, one obtains
 \begin{equation}\label{eq:kinematic_model_LTV_ multiple_bearing}\begin{cases}
\begin{aligned}
     \dot{v} &= -\omega^\times v + \eta + a \\
    \dot{\eta} & = -\omega^\times \eta\\
    \dot{y}_i &= -\omega^\times y_i - v,\\
\end{aligned}\end{cases}
 \end{equation}
From \eqref{eq:Landmark_bearing_vector}, we introduce the modified outputs for each bearing measurement $b_i, i\in \mathbb{I}$ as
\begin{equation}\label{eq:modified_output}
    d_i := \Pi_i p_c = \Pi_i y_i,
\end{equation}
where $\Pi_i:= \pi(R_c b_i)$ and we have made use of $\pi(R_c b_i)= R_c \pi(b_i)R_c^\top$ and $\pi(R_cb_i)(y_i -p_c) = 0$, for all $i\in\mathbb{I}$. Let the modified output measurements vector be 
$\mathbf{y} := [d_1^\top,d_2^\top,\dots,d_N^\top]^\top \in \mathbb{R}^{3N}$ where $d_i, i\in \mathbb{I}$ is defined in \eqref{eq:modified_output}. Define the state $\mathbf{x}=[v^ \top , \eta ^\top, y_1 ^ \top, \dots, y_N ^ \top] ^ \top \in \mathbb{R}^{(6+3N)}$. 
The resulting LTV system can then be written as
\begin{equation}\begin{cases}
 \begin{aligned}\label{eq:LTV_multi_bearing}
     \dot{\mathbf{x}} &= A\mathbf{x}+ Ba\\
     \mathbf{y} &= C\mathbf{x},
 \end{aligned}\end{cases}
\end{equation}
with the matrices $A(t) \in \mathbb{R} ^{ (6+3N) \times  (6+3N)}$, $B \in \mathbb{R}^{(6+3N) \times 3} $ and $C  \in \mathbb{R} ^ {3N \times (6+3N)}$  defined as

\begin{equation}\label{eq:ABC_matrices_multiple_bearings}
\begin{aligned}
A(t) &= \begin{bmatrix}
-\omega^\times & I_3 & 0_{3\times 3N}\\
0_{3\times 3} & -\omega^\times & 0_{3\times 3N}\\
-\mathbf{1}_N\otimes I_3 & 0_{3N\times 3} & -I_N\otimes \omega^\times
\end{bmatrix},\\[2pt]
B &= \begin{bmatrix} I_3 \\ 0_{3(N+1)\times 3} \end{bmatrix},
C(t) = \begin{bmatrix} 0_{3N\times 6} & \Pi_b \end{bmatrix},
\end{aligned}
\end{equation}

where $\Pi_b = \blkdiag(\Pi_1,\Pi_2,\dots,\Pi_N) \in \mathbb{R}^{3N \times 3N}$.
Now we proceed with the observer design  for the system in \eqref{eq:kinematic_model_LTV_ multiple_bearing} to estimate the body-frame velocity $v$, the body-frame gravity vector $\eta$, and the landmark position $y_i, i \in \mathbb{I}$.

\begin{remark}
The body-frame gravity vector $\eta  = R^ \top g$ is introduced as an additional state to remove the explicit dependence of the linear velocity and the landmark positions dynamics on the attitude $R$, yielding a linear time-varying system.
\end{remark}
Let $\mathbf{\hat{x}} : =  [\hat{v}^ \top, \hat{\eta}^\top, \hat{y}_1^ \top, \ldots, \hat{y}_N^ \top ] ^ \top $ denote the estimate of the state $\mathbf{x}$. The state observer is given by
\begin{equation}\label{eq:Riccati_observer_multiple_bearings}
    \mathbf{\dot{\hat{x}}} = A(t) \mathbf{\hat{x}} + Ba +  K(\mathbf{y} - C(t) \mathbf{\hat{x}}),
\end{equation}
where $\mathbf{\hat{x}}(0) \in  \mathbb{R}^ {(6+3N)}$. The gain matrix $K(t)$  is defined as 
\begin{equation}\label{eq:gain_K_multiple_bearings}
    K(t) = P(t)C(t)^ \top Q(t),
\end{equation}
where  $P(t)=P^\top (t)>0$ is the solution of the Riccati equation \eqref{eq:Riccati_equation}, with $P(0)\in \mathbb{R}^ {(6+3N) \times (6+3N)} $, $Q(t)  \in \mathbb{R}^ {3N\times 3N} $ and $V(t) \in \mathbb{R}^ {(6+3N) \times (6+3N)}$ satisfying the conditions specified after \eqref{eq:Riccati_equation}.
 
 Let $\tilde{\mathbf{x}} := \mathbf{x} - \hat{\mathbf{x}} = [\tilde{v}^ \top , \tilde{\eta}^ \top , \tilde{y}_1^ \top, \dots , \tilde{y}_N ^ \top] ^ \top$ denote the state estimation error, with $\tilde{v} : = v - \hat{v}$,  $\tilde{\eta} : = \eta - \hat{\eta}$ and $\tilde{{y}}_i= y_i - \hat{y}_i $, $i \in \{1,2,\dots, N\}$.
 
From \eqref{eq:kinematic_model_LTV_ multiple_bearing} and \eqref{eq:Riccati_observer_multiple_bearings}, the estimation error dynamics are given by 
\begin{equation}\label{eq:closed_loop_system_multiple_bearing}
    \mathbf{\dot{\tilde{x}}} = (A(t)-K(t)C(t))\mathbf{\tilde{x}}
\end{equation}

As shown in \cite[Lemma 1]{wangACC25}, the pair $(A(t),C(t))$ defined in \eqref{eq:ABC_matrices_multiple_bearings} is uniformly observable if there exist constants $\delta, \mu>0$ such that 
    \begin{equation}\label{PE_prop1}
        \int _t ^ {t+\delta} \pi(R(\tau) R_c b_i(\tau))d\tau > \mu I_3
    \end{equation}
 for all $t \geq 0$, $i \in\mathbb{I}$.
Furthermore, assuming that $\omega(t)$ is continuous and uniformly bounded, together with the fact that  the pair $(A(t),C(t))$ is uniformly observable, it follows that the conditions in Lemma \ref{lemma2} are satisfied. Consequently, one can show that the equilibrium point $\tilde{x}=0$ of \eqref{eq:closed_loop_system_multiple_bearing} is uniformly globally exponentially stable (UGES). The proof relies on the existence of the solution $P$ to the Riccati equation \eqref{eq:Riccati_equation} with lower and upper bounds \eqref{Riccati_bounds}. The time-derivative of the Lyapunov function $\mathcal{L}=\tilde{x}^\top P^{-1}\tilde{x}$, along the trajectories of \eqref{eq:closed_loop_system_multiple_bearing}, in view of the Riccati equation \eqref{eq:Riccati_equation}, is given by $\dot{\mathcal{L}}\leq -\frac{v_m}{p_M}\mathcal{L}$. 

The LTV observer in \eqref{eq:Riccati_observer_multiple_bearings} takes the following explicit form: 
\begin{equation}\label{eq:explicit_from}\begin{cases}
\begin{aligned}
     \dot{\hat{v}} &= - \omega^\times \hat{v} + \hat{\eta} + a + K_v (\mathbf{y}- C \mathbf{\hat{x}}) \\
    \dot{\hat{\eta}} & = - \omega^\times \hat{\eta} + K_{\eta}(\mathbf{y}-C\mathbf{\hat{x}})\\
    \dot{\hat{y}}_i &= - \omega^\times \hat{y}_i - \hat{v} + K_{{y}_i}(\mathbf{y}-C\mathbf{\hat{x}}),~~i=1,\hdots,N\\
\end{aligned}\end{cases}
 \end{equation}
with $K = [ K_v^\top,K_{\eta}^\top, K_{{y}_1}^\top, \dots, K_{{y}_N}^\top] ^ \top $ and $K_v,K_{\eta} , K_{{y}_i} \in \mathbb{R} ^{3 \times 3N}$.

\subsection{Pose estimation}\label{section_pose_estimation}
We propose the following pose estimation scheme:
\begin{equation}\label{pose_est1}
\begin{aligned}
\dot{\hat{R}}&=\hat{R}(\omega+\sigma_R)^\times\\
\dot{\hat{p}}&=\hat{R}\hat{v}+(\hat{R}\sigma_R)^\times\hat{p},
\end{aligned}
\end{equation}
where $\sigma_R=k_R (\hat{\eta} \times \hat{R}^\top g)$.
Differentiating $\tilde{R}:=R\hat{R}^\top$ with respect to time, and using \eqref{pose_est1}, one obtains the following rotational error dynamics:
\begin{equation}\label{eq:rotational_error_dynamics1}
    \dot{\tilde{R}}  =  \tilde{R}\left(- k_R (\tilde{R}^ \top g)^\times g  - \Lambda \tilde{\mathbf{x}}\right)^\times
\end{equation}
where $\Lambda$ is given by
\begin{equation}\label{eq: gamma1}
    \Lambda = \begin{bmatrix} {0}_{3 \times 3} & k_R g^\times  \hat{R}& {0}_{3 \times 3N} \end{bmatrix}.
\end{equation}
As per the observer design in the previous subsection, $\tilde{\mathbf{x}}$ is bounded and converges exponentially to zero when $t$ tends to infinity. Therefore, when $t$ tends to infinity, the rotational error dynamic reduces to
\begin{equation}\label{eq:error_R}
    \dot{\tilde{R}}  =  \tilde{R}(- k_R (\tilde{R}^ \top g)^\times g)^\times.
\end{equation}
As shown in \cite{NL_observer_Mouaad}, the equilibrium  $\tilde{R}^\top g=g$ is almost globally asymptotically stable\footnote{An equilibrium point is almost globally asymptotically stable if it is stable and attractive from all initial conditions except a set of zero Lebesgue measure.} for system \eqref{eq:error_R}. Note that  the fact that $\tilde{R}^\top g$ converges to $g$ implies that the pitch and roll estimation errors converge to zero when $t$ tends to infinity.\\
It is worth pointing out that if an additional vector measurement is available, \textit{e.g.,} from a magnetometer, then a full recovery of the orientation is possible.\\
As for the position error dynamics, defining $\tilde{p}=p-\tilde{R}\hat{p}$, one has $\dot{\tilde{p}}=R\tilde{v}$, which shows that $\tilde{p}(t)$ tends to a constant as $t$ tends to infinity in view of the exponential convergence of $\tilde{v}$ to zero.

\section{Monocular Mixed bearings VIO (MBVIO)} \label{MBVIO}
\subsection{Global reduced-order observer for body-frame linear velocity and gravity estimation}

 In this section, a reduced-order observer is developed to estimate the body-frame linear velocity $v$ and the body-frame gravity vector $\eta$. In contrast to the framework introduced in section \ref{section:mutli-bearings}, the landmark positions are not included in the observer state. Instead, the bearing measurements and their time derivatives are used to construct a measurable output relation that provides information about the body-frame velocity. Consequently, the observer state has a fixed dimension that is independent of the number of tracked landmarks. This formulation reduces the computational complexity associated with the landmark position estimation while retaining the visual information required for state correction.

The dynamics of the linear velocity and the gravity vector in the body frame are given by
 \begin{equation}\label{eq:kinematic_model_LTV_ mixed_bearing}\begin{cases}
\begin{aligned}
     \dot{v} &= -\omega^\times v + \eta + a \\
    \dot{\eta} & = -\omega^\times \eta.\\
\end{aligned}\end{cases}
 \end{equation}

Differentiating \eqref{eq:Landmark_bearing_vector} with respect to time, one gets

\begin{equation}\label{eq:bearing_derivative}
    \dot{b}_i = -{\omega_c}^\times b_i - \frac{\pi(b_i)}{l_i} v_c,
\end{equation}
where
\begin{equation}\label{eq:w_c_and_v_c}
    \omega_c = R_c^ \top \omega, \quad v_c = R_c ^ \top (v + \omega^\times p_c),
\end{equation}
and  $l_i := \|{}^ \mathcal{C}y_i\|, i\in \mathbb{I}$ denotes the distance between the camera and the $i$-th landmark, ${}^ \mathcal{C}y_i$ defined in \eqref{eq:Landmark_position_in_the_camera_frame} and we have made use of $\frac{d}{dt}\left(\frac{x}{\|x\|}\right) = \frac{\pi(x)}{\|x\|} \dot{x}$. 

For each $i \in \mathbb{I} $, define
\begin{equation}\label{eq:qi}
    q_i := b_i^\times  (\dot{b}_i + \omega_c^\times b_i)
\end{equation}
It then follows from \eqref{eq:bearing_derivative} that
\begin{equation}\label{q_i}
    q_i = - \frac{1}{l_i} {b_i}^\times v_c,
\end{equation}
and consequently $q_i ^ \top v_c = 0$, from which one has $Mv_c=0$, with 
\begin{equation}\label{eq:mstrix_M}
    M := \sum _{i =1} ^ N \gamma_i q_i q_i ^\top,
\end{equation}
where $\gamma_i>0$.
Using the expression of $v_c$ in \eqref{eq:w_c_and_v_c} yields
\begin{equation}\label{eq:M_barv}
    \bar{M}v = - \bar{M}\omega ^ \times p_c,
\end{equation}
where $\bar{M} = R_c M R_c ^\top$. Considering the virtual output as $\mathbf{y}_r := -\bar{M} \omega^ \times p_c$, from \eqref{eq:M_barv}, one has $\mathbf{y}_r= \begin{bmatrix}   \bar{M} & 0_{3\times 3} \end{bmatrix} \mathbf{x}_r$, with $\mathbf{x}_r=[v^ \top , \eta ^\top] ^ \top \in \mathbb{R}^6$. 
The resulting LTV system is given as
\begin{equation}\begin{cases}
 \begin{aligned}\label{eq:LTV_multi_bearing}
     \dot{\mathbf{x}}_r &= A (t)\mathbf{x}_r+ Ba\\
     \mathbf{y}_r &= C(t)\mathbf{x}_r,
 \end{aligned}\end{cases}
\end{equation}

with the matrices $A(t) \in \mathbb{R} ^{ 6 \times 6}$, $B \in \mathbb{R}^{6 \times 3} $ and $C(t)  \in \mathbb{R} ^ {3\times 6}$  defined as

\begin{equation}\label{eq:ABC_matrices_mixed_bearings}
\begin{aligned}
A(t) &= \begin{bmatrix}
-\omega^\times & I_3 \\
0_{3\times 3} & -\omega^\times \\
\end{bmatrix},\\[2pt]
B &= \begin{bmatrix} I_3 \\ 0_{3\times 3} \end{bmatrix},
C(t) = \begin{bmatrix}   \bar{M} & 0_{3\times 3} \end{bmatrix}.
\end{aligned}
\end{equation}
\begin{remark}
    Note that $q_i$ defined in \eqref{eq:qi} can be computed from bearings and optical flow measurements. The bearing time-derivative $\dot{b}_i$ can be obtained as follows:
    \begin{equation}
        \dot{b}_i = \frac{\pi(b_i) \mathcal{K}^{-1}}{\|\mathcal{K}^{-1}z_i\|} \begin{bmatrix}
            \dot{u}_i\\
            \dot{v}_i\\
            0
        \end{bmatrix},
    \end{equation}
where $(\dot{u}_i,\dot{v}_i)$ can be obtained from the optical flow associated with the $i$-th tracked landmark. 
The sparse optical flow associated with the $i$-th landmark is obtained by tracking its pixel coordinates across two consecutive frames. 
Let $(u_i^{k-1} ,v_i^{k-1})$ and $(u_i^{k},v_i^{k})$ denote the pixel coordinates of the same tracked landmark at times $t_{k-1}$ and $t_k$, respectively. 
The corresponding sparse optical flow is approximated by 
\begin{equation}
    \dot{u}_i(t_k) \approx \frac{u_i ^ k -u_i ^ {k-1}}{\Delta t_k}, \quad \dot{v}_i (t_k) \approx \frac{v_i ^ k -v_i ^ {k-1}}{\Delta t_k},
\end{equation}
where $\Delta t_k:= t_k - t_{k-1}$. 
\end{remark}

Let $\hat{\mathbf{x}}_r := [\hat{v}^\top, \hat{\eta}^\top]^\top $ be the state estimate of $\mathbf{x}_r$. We propose the following observer:

\begin{equation}\label{eq:riccati_observer_mixed_bearings}
    \dot{\hat{\mathbf{x}}}_r = A(t)\hat{\mathbf{x}}_r + Ba + K_r(t) ( {\mathbf{y}}_r -C(t)\hat{\mathbf{x}}_r),
\end{equation}
where $\hat{\mathbf{x}}_r (0) \in \mathbb{R}^6$. The gain matrix $K_r(t)$ is defined as 
\begin{equation}\label{eq:Klaman_gain_mixed_bearings}
    K_r(t) = P(t) C(t)^\top Q(t)
\end{equation}
where $P(t)=P^\top(t)>0$ is the solution of the Riccati equation \eqref{eq:Riccati_equation}, with $A(t)$, $B$, and $C(t)$ defined in \eqref{eq:ABC_matrices_mixed_bearings} and $ P(0) \in \mathbb{R}^{6\times6}$ being a symmetric positive definite matrix. The matrices $Q(t) \in \mathbb{R}^{3\times3}$ and $V(t) \in \mathbb{R}^{6\times6}$ satisfy the conditions provided after \eqref{eq:Riccati_equation}.\\
Let $\tilde{\mathbf{x}}_r := \mathbf{x}_r - \hat{\mathbf{x}}_r = [\tilde{v}^ \top , \tilde{\eta}^ \top ] ^ \top$ be the state estimation error, with $\tilde{v} : = v - \hat{v}$ and $\tilde{\eta} : = \eta - \hat{\eta}$.

From \eqref{eq:LTV_multi_bearing} and \eqref{eq:riccati_observer_mixed_bearings}, the estimation error dynamics are given by
\begin{equation}\label{eq:close_loop_sys_mixed_bearings}
    \dot{\tilde{\mathbf{x}}}_r = (A(t) -K_r(t) C(t))\tilde{\mathbf{x}}_r
\end{equation}
where $(A(t),C(t))$ are defined in \eqref{eq:ABC_matrices_mixed_bearings}.
The following proposition provides some sufficient conditions to satisfy the conditions in Lemma \ref{lemma2}:
 \begin{proposition} \label{Proposition1}
    Assume that $\omega(t)$ and $\dot{b}_i(t)$, $i=1,\hdots, N$, are continuous and uniformly bounded. Assume that there exist constants $\delta_M, \mu_M >0$ such that for all $t\geq0$
     \begin{equation}\label{PE_cond}
         \sum _{i=1} ^ N \int _t ^ {t+\delta_M} \gamma_i \bar{q}_i(\tau) \bar{q}_i ^\top (\tau)  d\tau \geq \mu_MI_3,
     \end{equation}
     where $\bar{q}_i(t):= R(t)R_cq_i(t)$, and $q_i$ is defined in \eqref{q_i}.
     Then, the conditions of Lemma \ref{lemma2} are satisfied.
 \end{proposition}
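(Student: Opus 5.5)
The plan is to verify the three hypotheses of Lemma \ref{lemma2}: continuity and uniform boundedness of $A(t)$ and $C(t)$ in \eqref{eq:ABC_matrices_mixed_bearings}, and the uniform observability bound \eqref{UO}. The first two are direct. $A(t)$ depends only on $\omega$, which is continuous and bounded by assumption. For $C(t)=[\bar M\ 0_{3\times3}]$, the definition \eqref{eq:qi} gives $q_i=b_i^\times(\dot b_i+\omega_c^\times b_i)$ with $\|b_i\|=1$ and $\omega_c=R_c^\top\omega$. Hence $\|q_i\|\le \|\dot b_i\|+\|\omega\|$ is bounded, and $q_i$ is continuous because $b_i,\dot b_i,\omega$ are. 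So $M=\sum_i\gamma_iq_iq_i^\top$ and $\bar M=R_cMR_c^\top$ are continuous with $0\le \bar M\le \bar m I_3$ for some $\bar m>0$.

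For observability, I will use a Lyapunov change of coordinates that removes the rotation. Set $T(t)=\blkdiag(R(t),R(t))$, which is orthogonal, and let $z=T\mathbf{x}_r$. Using $\dot R=R\omega^\times$, the homogeneous part of \eqref{eq:LTV_multi_bearing} becomes
\begin{equation*}
\dot z=A_0z,\qquad A_0=\begin{bmatrix}0_{3\times3}&I_3\\0_{3\times3}&0_{3\times3}\end{bmatrix}.
\end{equation*}
Multiplying the output by the orthogonal matrix $R$ (which does not change $C^\top C$) gives $R\,\mathbf{y}_r=[H(t)\ 0_{3\times 3}]z$, where
\begin{equation*}
H(t)=R\bar MR^\top=\sum_i\gamma_i\bar q_i\bar q_i^\top .
\end{equation*}
The state-transition matrices are related by $\Phi(\tau,t)=T^\top(\tau)\Phi_0(\tau,t)T(t)$, where $\Phi_0(\tau,t)=\begin{bmatrix}I_3&(\tau-t)I_3\\0&I_3\end{bmatrix}$. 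Substituting into \eqref{obs_gram} gives $W_0(t,t+\delta)=T^\top(t)W_z(t,t+\delta)T(t)$, so it suffices to bound $W_z$ from below. For a unit vector $x=(\alpha,\beta)$,
\begin{equation*}
x^\top W_zx=\int_t^{t+\delta} f(\tau)^\top H(\tau)^2f(\tau)\,d\tau,\qquad f(\tau)=\alpha+(\tau-t)\beta.
\end{equation*}
Since $H\le \bar mI$, we have $H^2\ge H/\bar m$, so it is enough to lower-bound $\int f^\top Hf$. The hypothesis \eqref{PE_cond} says exactly that $\int_s^{s+\delta_M}H\ge\mu_M I_3$ for every $s$.

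The main obstacle is passing from persistent excitation of $H$ against a constant vector to persistent excitation against the affine function $f$. No bound on $\dot H$ is available, since $\ddot b_i$ is not assumed bounded, so the usual "PE with bounded derivative" lemmas cannot be used. Instead I will work with a window of length $\delta=n\delta_M$ for some integer $n$, split into subintervals $I_k=[s_k,s_k+\delta_M]$ with $s_k=t+k\delta_M$. On each $I_k$ write $f=f(s_k)+e$ with $\|e\|\le\delta_M\|\beta\|$. The inequality $(u+e)^\top H(u+e)\ge\tfrac12u^\top Hu-e^\top He$ then gives
\begin{equation*}
\int_{I_k}f^\top Hf\ \ge\ \tfrac12\mu_M\|f(s_k)\|^2-\bar m\,\delta_M^3\|\beta\|^2 .
\end{equation*}
Two cases follow. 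If $\|\beta\|$ is small relative to $\|\alpha\|$, take $k=0$, where $f(s_0)=\alpha$ and the error term is dominated. Otherwise $\|\beta\|\ge c>0$. Because $f$ is affine, $\|f(s_k)\|$ can be small for at most one or two consecutive indices, and for the farthest node $\|f(s_k)\|\gtrsim(n/2-1)\delta_M\|\beta\|$. Choosing $n$ large enough that $\tfrac12\mu_M(n/2-1)^2\delta_M^2>2\bar m\delta_M^3$ makes that subinterval contribute a positive amount, and this amount is independent of $t$.

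Combining the two cases with an explicit threshold yields $W_z\ge\mu I$ and hence $W_0(t,t+\delta)\ge\mu I$ for all $t\ge0$. This is \eqref{UO}, which completes the verification of the hypotheses of Lemma \ref{lemma2}.
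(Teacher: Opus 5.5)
Your change of coordinates is correct. It is the same rotation-removing factorization the paper uses: its $\mathcal{R}(t)=\blkdiag(R^\top,R^\top)$ is your $T^\top$. Your continuity and boundedness check for $A(t)$ and $C(t)$ is also fine. The step that fails is ``since $H\le \bar m I$, we have $H^2\ge H/\bar m$.'' The inequality runs the other way. If $H$ has eigenvalues $\lambda\in[0,\bar m]$, then $\lambda^2\le \bar m\lambda$, i.e.\ $H^2\le \bar m H$. The bound $\lambda^2\ge \lambda/\bar m$ holds only when $\lambda=0$ or $\lambda\ge 1/\bar m$; for instance it fails for $H=\epsilon I_3$ with $\epsilon>0$ small. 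So your lower bound on $\int f^\top H f$ does not, as written, give a lower bound on $x^\top W_z x=\int f^\top H^2 f$, which is the quantity you actually need.

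The repair is short, and it is exactly the extra ingredient in the paper's proof. By Cauchy--Schwarz, for any unit $z$ and any $s\ge 0$,
\[
\int_s^{s+\delta_M} z^\top H^2 z\,d\tau=\int_s^{s+\delta_M}\|Hz\|^2 d\tau\ \ge\ \frac{1}{\delta_M}\Bigl(\int_s^{s+\delta_M} z^\top H z\,d\tau\Bigr)^2\ \ge\ \frac{\mu_M^2}{\delta_M}.
\]
Hence $H^2$ is itself persistently exciting with level $\mu_M^2/\delta_M$, and it satisfies $0\le H^2\le \bar m^2 I_3$. Your subinterval argument then goes through verbatim with $H^2$ in place of $H$, and with $\mu_M^2/\delta_M$ and $\bar m^2$ in place of $\mu_M$ and $\bar m$. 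I checked the rest of that argument and it is sound:
\begin{itemize}
\item the inequality $(u+e)^\top H(u+e)\ge \tfrac12 u^\top Hu-e^\top He$ holds for any $H\ge 0$ by Young's inequality;
\item the farthest-node bound follows from $\|f(s)\|\ge |s-s^*|\,\|\beta\|$, where $s^*$ minimizes $\|f\|$.
\end{itemize}
An alternative fix is $\int f^\top H^2 f\ge \bigl(\int f^\top Hf\bigr)^2/\int\|f\|^2$, since $\int\|f\|^2$ is uniformly bounded for unit $x$ on a window of fixed length $n\delta_M$.

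With this fix, your route differs from the paper's only in the final step. The paper invokes \cite[Lemma 1]{NL_observer_V-INS} for the constant nilpotent $\bar A$, and so only has to check persistent excitation of $\mathcal{O}^\top\mathcal{O}=\blkdiag(\tilde M^2,\tilde M^2)$. Your two-case subinterval argument instead proves the transfer from constant to affine test vectors directly, without needing any bound on $\dot H$. That makes the proof self-contained and gives explicit constants, at the cost of a longer computation.
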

 
 \noindent\textit{Proof.} See Appendix~\ref{proof_uni_obs}.

 It is clear that condition \eqref{PE_cond} is satisfied as long as there exists at least one vector $\bar{q}_i$  that is persistently exciting over every time window of length $\delta_M$, \textit{i.e.,} satisfying \eqref{PE_cond}. The following proposition provides a geometric interpretation of persistency of excitation condition \eqref{PE_cond}.

 \begin{proposition}\label{Prop_suff_cond_excit}
Sufficient conditions guaranteeing the satisfaction of \eqref{PE_cond} are as follows:
 \begin{itemize}
 \item For every time $t$ such that $v_c(t)\not=0$, there is no direction in the plane orthogonal to $v_c$, which is orthogonal to all bearing $b_i(t)$, $i=1,\dots,N$, and
 \item The vector $R(t)R_c v_c$ does not remain aligned with a fixed direction throughout an interval of length $\delta_M$.
 \end{itemize}
 \end{proposition}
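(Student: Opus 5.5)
The argument works directly with the matrix $\sum_i \gamma_i \bar q_i \bar q_i^\top$ and shows that the two geometric conditions force it to be uniformly positive definite over windows of length $\delta_M$. Using \eqref{q_i}, we have $q_i = -\frac{1}{l_i} b_i^\times v_c$, and therefore
\begin{equation*}
\bar q_i = -\frac{1}{l_i} R R_c\, b_i^\times v_c = -\frac{1}{l_i} (RR_c b_i)^\times (RR_c v_c).
\end{equation*}
Write $\bar b_i := RR_c b_i$ and $\bar v_c := RR_c v_c$; these are the bearing and camera velocity expressed in the inertial frame. Each $\bar q_i$ is orthogonal to $\bar v_c$. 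Consequently $\bar M_{\mathcal I}(t):=\sum_i \gamma_i \bar q_i\bar q_i^\top$ has $\bar v_c$ in its kernel, and it acts only on the plane $\bar v_c^{\perp}$. I will show that the first condition makes $\bar M_{\mathcal I}(t)$ positive definite on that plane, and that the second condition rotates the plane enough that the integral covers all of $\mathbb{R}^3$.

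\textbf{Step 1: rank two on $\bar v_c^\perp$ at each time.} Take $t$ with $v_c(t)\neq 0$ and a unit vector $w\perp \bar v_c$. Then
\begin{equation*}
w^\top \bar M_{\mathcal I} w=\sum_i \frac{\gamma_i}{l_i^2}\big(w^\top(\bar b_i\times \bar v_c)\big)^2 = \sum_i \frac{\gamma_i}{l_i^2}\big(\bar b_i^\top(\bar v_c\times w)\big)^2 .
\end{equation*}
The vector $\bar v_c\times w$ lies in $\bar v_c^\perp$ and is nonzero. By the first condition, read in the rotated frame, it cannot be orthogonal to every $\bar b_i$, so at least one term is positive. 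Hence $\bar M_{\mathcal I}(t)\ge \lambda(t)\,\pi(\bar v_c/\|\bar v_c\|)\,\|\bar v_c\|^2$ for some $\lambda(t)>0$.

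\textbf{Step 2: making the bound uniform.} The pointwise statement has to become a uniform one. I will assume, implicitly with the conditions, that the landmark distances $l_i$ are bounded and that the non-orthogonality holds with a uniform margin. Since $\dot b_i$ and $\omega$ are bounded, the quantities involved are uniformly continuous, and a compactness argument then gives $\lambda(t)\ge\lambda_0>0$.

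\textbf{Step 3: covering all of $\mathbb{R}^3$.} For a unit $z\in\mathbb{R}^3$, Step 1 gives
\begin{equation*}
z^\top\!\int_t^{t+\delta_M}\!\bar M_{\mathcal I}\,d\tau \ \ge\ \lambda_0\int_t^{t+\delta_M}\big(\|\bar v_c\|^2-(z^\top \bar v_c)^2\big)d\tau=\lambda_0\int_t^{t+\delta_M}\|z\times\bar v_c\|^2 d\tau .
\end{equation*}
The integrand vanishes throughout the window only if $\bar v_c$ stays aligned with the fixed direction $z$ (or is zero). The second condition excludes this, so the integral is positive for every $z$ and every $t$. This is precisely \eqref{PE_cond}, by the same reasoning as \eqref{PE_prop1}.

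\textbf{Main obstacle.} The qualitative conditions yield only strict positivity for each fixed $t$ and $z$, whereas \eqref{PE_cond} needs a bound $\mu_M$ uniform in both. The step I expect to be hardest is interpreting the second condition quantitatively. I would use
\begin{equation*}
\inf_{t,\|z\|=1}\int_t^{t+\delta_M}\|z\times\bar v_c\|^2\,d\tau>0,
\end{equation*}
which is the standard persistent-excitation reading of ``not aligned'', and justify it through the uniform boundedness and equicontinuity of $\bar v_c$. These follow from the boundedness of $\omega$ and of $\dot v$ along bounded trajectories. A compactness argument over $z\in\mathbb{S}^2$ then gives $\mu_M=\lambda_0\cdot\inf(\cdot)$.
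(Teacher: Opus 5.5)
Your proposal is essentially the paper's proof. You obtain the pointwise bound $\sum_i\gamma_i\bar q_i\bar q_i^\top\ge\beta\,(\|\bar v_c\|^2I_3-\bar v_c\bar v_c^\top)$ from the bearing condition on the plane orthogonal to $\bar v_c$, where the paper reaches it via the factorization $(\bar R v_c^\times)(\sum_i\frac{\gamma_i}{l_i^2}b_ib_i^\top)(\bar R v_c^\times)^\top$ and $v_c^\times\pi(u_v)=v_c^\times$; you then use $z^\top(\|\bar v_c\|^2I_3-\bar v_c\bar v_c^\top)z=\|z\times\bar v_c\|^2$ to get $\mu_M=\beta\zeta$, and, like the paper, you take the uniform constants $\beta$ and $\zeta$ as the quantitative meaning of the two geometric conditions.

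One correction: drop the suggestion that boundedness and equicontinuity plus compactness would recover this uniformity from the qualitative conditions. On the unbounded time axis that argument fails. For example, with orthonormal $e_1,e_2$, the signal $\bar v_c(t)=e_1+e^{-t}e_2$ is bounded, Lipschitz, and never aligned with a fixed direction on any window, yet $\int_t^{t+\delta_M}\|e_1\times\bar v_c\|^2d\tau=\tfrac12 e^{-2t}(1-e^{-2\delta_M})\to 0$. So these uniform bounds must simply be assumed, which is what both you and the paper end up doing.
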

 \noindent\textit{Proof.} See Appendix~\ref{proof_prop_suff_cond}.

The following proposition establishes the stability properties of the equilibrium point $\tilde{\mathbf{x}}_r=0$ of \eqref{eq:close_loop_sys_mixed_bearings}:
\begin{proposition}
Under the conditions of Proposition \ref{Proposition1}, the equilibrium point $\tilde{\mathbf{x}}_r=0$ of \eqref{eq:close_loop_sys_mixed_bearings} is UGES.
\end{proposition}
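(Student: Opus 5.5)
The plan is to use the standard Riccati-based Lyapunov argument that the paper already recalled for the full-order observer of Section~\ref{section:mutli-bearings}, now applied to the six-dimensional pair $(A(t),C(t))$ of \eqref{eq:ABC_matrices_mixed_bearings}.

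First, I invoke Proposition~\ref{Proposition1}. Under its assumptions, the conditions of Lemma~\ref{lemma2} hold. Hence the Riccati equation \eqref{eq:Riccati_equation}, with $A(t),C(t)$ from \eqref{eq:ABC_matrices_mixed_bearings}, admits a solution with $P_m I\le P(t)\le P_M I$ for all $t\ge0$. I also note that $A(t)$ is bounded because $\omega$ is, and $C(t)$ is bounded because $\bar M=R_cMR_c^\top$ with $\|q_i\|\le\|\dot b_i\|+\|\omega\|$. Consequently $K_r(t)=P C^\top Q$ is bounded and continuous, so solutions of \eqref{eq:close_loop_sys_mixed_bearings} exist globally.

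Next, I take the candidate $\mathcal{L}(t,\tilde{\mathbf{x}}_r)=\tilde{\mathbf{x}}_r^\top P^{-1}(t)\tilde{\mathbf{x}}_r$, which satisfies $P_M^{-1}\|\tilde{\mathbf{x}}_r\|^2\le\mathcal{L}\le P_m^{-1}\|\tilde{\mathbf{x}}_r\|^2$. Using $\frac{d}{dt}P^{-1}=-P^{-1}\dot P P^{-1}$ together with \eqref{eq:Riccati_equation}, the terms $P^{-1}A+A^\top P^{-1}$ cancel. Substituting $K_r=PC^\top Q$ then gives
\[
\dot{\mathcal{L}}=-\tilde{\mathbf{x}}_r^\top\left(C^\top Q C+P^{-1}VP^{-1}\right)\tilde{\mathbf{x}}_r\le -\frac{v_m}{P_M^2}\|\tilde{\mathbf{x}}_r\|^2\le -\frac{v_m P_m}{P_M^2}\,\mathcal{L}.
\]
The comparison lemma then yields $\|\tilde{\mathbf{x}}_r(t)\|\le\sqrt{P_M/P_m}\,e^{-\lambda(t-t_0)/2}\|\tilde{\mathbf{x}}_r(t_0)\|$ with $\lambda=v_mP_m/P_M^2$. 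Because the constants are independent of $t_0$ and of the initial condition, this is UGES.

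The only delicate point is confirming that the hypotheses of Lemma~\ref{lemma2} really hold, namely uniform boundedness and continuity of $C(t)$ and the uniform observability bound. All of this is delegated to Proposition~\ref{Proposition1}, whose continuity and boundedness assumptions on $\omega$ and $\dot b_i$ are exactly what is needed. The Lyapunov computation itself is routine.
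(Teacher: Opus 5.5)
Your proposal is correct and follows the route the paper indicates: Proposition~\ref{Proposition1} gives the hypotheses of Lemma~\ref{lemma2}, hence uniform bounds on $P(t)$, and the Lyapunov function $\mathcal{L}=\tilde{\mathbf{x}}_r^\top P^{-1}\tilde{\mathbf{x}}_r$ then satisfies $\dot{\mathcal{L}}=-\tilde{\mathbf{x}}_r^\top(C^\top QC+P^{-1}VP^{-1})\tilde{\mathbf{x}}_r$, exactly as you computed. The only difference is that your rate $v_mP_m/P_M^2$ is more conservative than the $v_m/P_M$ the paper quotes for the analogous observer (obtainable from $\tilde{\mathbf{x}}_r^\top P^{-2}\tilde{\mathbf{x}}_r\ge \mathcal{L}/P_M$), and either rate yields UGES.
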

\begin{proof}
The detailed proof is omitted as it is straightforward using Lyapunov function $\mathcal{L}=\tilde{\mathbf{x}}_r^\top P^{-1}\tilde{\mathbf{x}}_r$, under the conditions of Proposition \ref{Proposition1}.
\end{proof}

The explicit form of the LTV observer in \eqref{eq:riccati_observer_mixed_bearings} is given as
\begin{equation}\label{eq:explicit_from_mixed_bearing}\begin{cases}
\begin{aligned}
     \dot{\hat{v}} &= - \omega^\times \hat{v} + \hat{\eta} + a + K_{rv} (\mathbf{y}_r- C (t) {\hat{\mathbf{x}}}_r) \\
    \dot{\hat{\eta}} & = - \omega^\times \hat{\eta} + K_{r\eta}(\mathbf{y}_r-C(t){\hat{\mathbf{x}}}_r),\\
\end{aligned}\end{cases}
 \end{equation}
with $K_r = [ K_{rv}^\top,K_{r\eta}^\top]^\top $ and $K_{rv},K_{r\eta}  \in \mathbb{R} ^{3\times3}$.

The position and orientation are estimated as in section \ref{section_pose_estimation}.

\section{Monocular mixed bearings VIO with IMU-bias estimation}\label{MBVIO_bias_est}
\subsection{Reduced-order observer for body-frame linear velocity, gravity and IMU-bias estimation}
In this section, the reduced mixed-bearings observer presented in the previous section is extended to account for the gyroscope and accelerometer biases. Let $\omega_m$ and $a_m$ denote gyroscope and accelerometer measurements and let $b_g$ and $b_a$ denote the constant gyroscope and accelerometer biases such that
\begin{equation}\label{eq:measurement_model_IMU}
\begin{aligned}
     \omega_m = \omega + b_g\\
     a_m = a + b_a,
\end{aligned}
\end{equation}
Define the augmented state vector as $\mathbf{x}_b:= [v^\top, \eta^\top, b_a ^\top , b_g^\top]^\top$, the measured input as $u_m = [\omega_m ^\top , a_m ^ \top]^\top $.
System \eqref{eq:kinematic_model_LTV_ mixed_bearing} augmented by the constant biases dynamics is given by 
\begin{equation}\label{eq:augmented_dynamics_compact}
    \dot{\mathbf{x}}_b = f_b(\mathbf{x}_b, u_m),
\end{equation}
where
\begin{equation}\label{eq:augmented_dynamics}
    f_b(\mathbf{x}_b,u_m) = 
    \begin{bmatrix}
         -(\omega_m - b_g)^\times v + \eta + a_m - b_a\\
         -(\omega_m - b_g)^\times \eta\\
          0_{3\times1}\\
          0_{3\times1}
    \end{bmatrix},
\end{equation}
Let $\hat{\mathbf{x}}_b:= [\hat{v}^\top, \hat{\eta}^\top, \hat{b}_a ^\top , \hat{b}_g^\top]^\top \in \mathbb{R}^{12}$ denote the estimate of $\mathbf{x}_b$, and define the bias-compensated angular velocity estimate as $\hat{\omega}:= \omega_m - \hat{b}_g$.
One can show that 
\begin{equation}
f_b(\mathbf{x}_b, u_m)-f_b(\hat{\mathbf{x}}_b, u_m)=A(t)\mathbf{\tilde{x}}_b+\phi({\mathbf{\tilde{x}}}_b),
\end{equation}
where the matrices $A(t)\in \mathbb{R}^{12\times12}$ and $\phi({\mathbf{\tilde{x}}}_b)\in \mathbb{R}^{12}$ are given by
\begin{equation}\label{eq:A_phi_matrices_reduced_order_and_bias}
    \begin{aligned}
      A(t) =  \begin{bmatrix}
        -\hat{\omega}^\times & I_3 & -I_3& - \hat{v}^\times\\
        0_{3\times3} & -\hat{\omega}^\times & 0_{3\times3}& -\hat{\eta}^\times \\
        0_{3\times3}&0_{3\times3}&0_{3\times3}&0_{3\times3}\\
        0_{3\times3}&0_{3\times3}&0_{3\times3}&0_{3\times3}
     \end{bmatrix} , ~
     \phi(\tilde{\mathbf{x}}_b) =\begin{bmatrix}
      \tilde{b}_g ^\times \tilde{v} \\
      \tilde{b}_g^\times \tilde{\eta}\\
      0_{3\times1}\\
      0_{3\times1}
  \end{bmatrix} 
    \end{aligned}
\end{equation}
and $\tilde{\mathbf{x}}_b := \mathbf{x}_b - \hat{\mathbf{x}}_b = [\tilde{v}^ \top , \tilde{\eta}^ \top , \tilde{b}_a^ \top, \tilde{b}_g ^ \top] ^ \top$ denotes the state estimation error, with $\tilde{v} : = v - \hat{v}$,  $\tilde{\eta} : = \eta - \hat{\eta}$, $\tilde{b}_a : = b_a - \hat{b}_a$, and $\tilde{b}_g : = b_g - \hat{b}_g$.

Let $\hat{\omega}_c:= R_c^ \top \hat{\omega}$ and $\hat{v}_c := R_c ^\top (\hat{v} + \hat{\omega} ^ \times p_c)$. Using $q_i$ defined in \eqref{eq:qi} and $v_c$ defined in \eqref{eq:w_c_and_v_c}, one has $q_i ^\top v_c=0$, and hence, one can show that 
\begin{equation}\label{output}
    q_i ^\top v_c= \hat{q}_i^\top \hat{v}_c + C_i \tilde{\mathbf{x}}_b + \bar{\phi}_i (\tilde{\mathbf{x}}_b)=0,
\end{equation}
where, for each $i \in \{1,2,\dots, N\}$,  $\hat{q}_i = b_i ^\times ( \dot{b}_i + \hat{\omega}_c ^ \times b_i)$ , $C_i = [\hat{q}_i ^\top R_c ^\top, 0_{1\times3} , 0_{1\times3}, \alpha_i]$, with $\alpha_i = \hat{q}_i ^\top R_c ^\top p_c ^ \times - \hat{v}_c^\top \pi(b_i) R_c ^\top$, and $\bar{\phi}_i (\tilde{\mathbf{x}}_b) = -\tilde{b}_g ^\top R_c \pi(b_i)  R_c ^\top \tilde{v} - \tilde{b}_g ^\top R_c \pi(b_i) R_c ^\top p_c^\times \tilde{b}_g$.\\
Let $\bar{\phi}(\tilde{\mathbf{x}}_b) = [\bar{\phi}_1(\tilde{\mathbf{x}}_b)\dots,\bar{\phi}_N(\tilde{\mathbf{x}}_b)]^\top \in \mathbb{R}^N$, $\hat{\mathcal{Q}}= [\hat{q}_1,\dots,\hat{q}_N]^\top \in \mathbb{R}^{N\times3}$ and $C(t) \in \mathbb{R}^{N\times12} $ be defined as 
\begin{equation}\label{c_matrix}
    C(t) : = 
    \begin{bmatrix}  C_1^\top & \cdots  & C_N^\top \end{bmatrix} ^ \top.
\end{equation}
From \eqref{output}, it follows that
\begin{equation}\label{eq:matrix_C_reduced_order_bias}
    \hat{\mathcal{Q}}\hat{v}_c + C\tilde{\mathbf{x}}_b + \bar{\phi}(\tilde{\mathbf{x}}_b ) =0.
\end{equation}
We propose the following state observer: 
\begin{equation}\label{eq:augmented_dynamics_estimate}
    \dot{\hat{\mathbf{x}}}_b = f_b(\hat{\mathbf{x}}_b, u_m) - K_b\hat{\mathcal{Q}}\hat{v}_c,
\end{equation}
leading to the following estimation error dynamics:
\begin{equation}\label{eq:closed-loop_sys_reduced_order_bias}
    \dot{\tilde{\mathbf{x}}}_b =
    (A(t) - K_b(t) C(t)) \tilde{\mathbf{x}}_b + \phi(\tilde{\mathbf{x}}_b) - K_b(t) \bar{\phi}(\tilde{\mathbf{x}}_b), 
\end{equation}
where the gain matrix $K_b(t)$ is defined as $K_b(t) = P(t) C(t)^\top Q(t)$, the matrix $P(t) = P^\top (t) >0$ is the solution of the continuous Riccati  equation \eqref{eq:Riccati_equation}, with $P(0) \in \mathbb{R}^{12\times12}$ being a symmetric positive definite matrix, and the matrices $V(t) \in \mathbb{R}^{12\times12}$ and $Q(t) \in \mathbb{R}^{N\times N}$ satisfy the conditions given after \eqref{eq:Riccati_equation}. The matrices $A(t)$ and $C(t)$ are defined in \eqref{eq:A_phi_matrices_reduced_order_and_bias} and \eqref{c_matrix}, respectively.

Now, one can conclude that if the conditions of Lemma \ref{lemma1} hold, then the equilibrium point $\tilde{\mathbf{x}}_b=0$ is locally exponentially stable for the closed loop system \eqref{eq:closed-loop_sys_reduced_order_bias}. The detailed proof is omitted here as it follows directly from the Lyapunov first method using the Lyapunov function $\mathcal{L}=\tilde{\mathbf{x}}_b^\top P^{-1}\tilde{\mathbf{x}}_b$, the Riccati equation \eqref{eq:Riccati_equation} and the fact that $\lim_{||\tilde{\mathbf{x}}_b||\rightarrow 0} \frac{||\phi(\tilde{\mathbf{x}}_b)||}{||\tilde{\mathbf{x}}_b||}=0$ and $\lim_{||\tilde{\mathbf{x}}_b||\rightarrow 0} \frac{||\bar{\phi}(\tilde{\mathbf{x}}_b)||}{||\tilde{\mathbf{x}}_b||}=0$.

 The explicit form of the observer in \eqref{eq:augmented_dynamics_estimate} is the given by
 \begin{equation}
     \begin{cases}
         \begin{aligned}
             \dot{\hat{v}} &= -\hat{\omega}^\times \hat{v}+\hat{\eta} + a_m -\hat{b}_a- K_b^v \hat{Q}\hat{v}_c,\\
             \dot{\hat{\eta}} &= -\hat{\omega}^\times \hat{\eta}- K_b^\eta \hat{Q}\hat{v}_c, \\
             \dot{\hat{b}}_a &= - K_{b}^a \hat{Q}\hat{v}_c,\\
             \dot{\hat{b}}_g & = -K_{b}^g \hat{Q}\hat{v}_c,
         \end{aligned}
     \end{cases}
 \end{equation}
with $K_b = [{K_{b}^v}^\top,{K_{b}^\eta}^\top,{K_{b}^a}^\top, {K_{b}^g }^\top]^\top \in \mathbb{R}^{12\times N}$.

    
The position and orientation are estimated as in section \ref{section_pose_estimation} using the bias-compensated angular velocity estimate $\hat{\omega} = \omega_m - \hat{b}_g$.

\section{Simulation results}

In this section, we evaluate and compare the performance of the three proposed VIO schemes: the monocular bearing-based VIO (MVIO), the monocular mixed-bearing VIO (MBVIO), and the monocular mixed-bearing VIO with IMU bias estimation (MBVIO-B), using a common simulated trajectory. All three estimation schemes are evaluated under identical simulation conditions. The proposed estimation schemes are implemented in Python  within a continuous-discrete framework to accommodate the different sampling rates of the IMU and vision measurements. Algorithm \ref{alg:mono_mixed_observer} illustrates this implementation for the MBVIO scheme described in Section \ref{MBVIO}. 

The simulation has a total duration of $100$~s. IMU measurements are generated at $200$~Hz, while camera measurements are generated at $20$~Hz. The position of the vehicle is given by $p(t) = [ 3 \cos (t), 3 \sin(t), \sin(2t)]^\top$ and its orientation is given by $R(t) = R_z ( \psi(t))R_y(\theta(t)) R_x(\phi(t))$, where $\phi(t)= 0$, $\theta(t) = - \arctan \bigl( \frac{2}{3} \cos(2t) \bigr)$, and $\psi(t) = t+ \frac{\pi}{2}$.
The environment is a rectangular room of dimension $10 \times 10 \times 4$~m. A total of $450$ landmarks are distributed within the interior of the room. A monocular pinhole camera model is used to generate the visual measurements, with its optical axis aligned with the forward direction of the body, and its center located at the origin of the body. The camera extrinsic parameters are given by $R_c=R_z(-\pi/2)R_x(-\pi/2)$ and $p_c=0_{3\times 1}$.
We track up to 40 landmarks in each frame. In the next frame, we keep the landmarks that are still visible and add new ones to replace those that are lost, up to a maximum of 40 landmarks per frame.
The IMU measurements are modeled with white noise with standard deviations $\sigma_\omega = 2.4 \times 10 ^ {-3} ~\mathrm{rad}/ \mathrm{s}$ for the gyroscope and $\sigma_a =2.83 \times 10 ^{-2} \mathrm{m}/ \mathrm{s}^2$ for the accelerometer. 
Image measurements are subject to zero-mean Gaussian noise with a standard deviation $\sigma_{px} = 0.5 ~ \mathrm{px}$. 

For the MBVIO-B simulation, constant accelerometer and gyroscope biases $b_a = [0.10, -0.08, 0.12]^\top$ and $b_g = [0.005, -0.003, 0.008]^\top$ are added to the simulated IMU measurements.
For the three estimation schemes, we set $\hat{R}(0) = R_z (\pi/6)R_y(\pi/6) R_x(\pi/6)$, $\hat{v}(0)= \hat{p} (0)= 0_{3\times1}$, $\hat{\eta}(0) = \hat{R}^\top(0)g$. For the mixed-bearing VIO with IMU bias estimation, we set $\hat{b}_a = \hat{b}_g = 0_{3\times1}$. The gain of the attitude correcting term is set to $k_R = 3$. For the MVIO, we set $\hat{y}_i(0)=0$ and for the MBVIO, we set $\gamma_i =1$, for each landmark $i$.
The matrices $Q$ and $V$ involved in the Riccati equation are selected according to the guidelines provided in the next subsection.
The estimated position trajectory $\hat{p}$ is aligned with the ground-truth position trajectory $p$ using the Umeyama method \cite{Umeyama}, yielding aligned position $\hat{p}_{\text{aligned}} = R_u \hat{p}+ p_u$, where $(R_u, p_u) \in \mathrm{SE}(3)$ denotes the optimal alignment transformation.
\\
 Simulation results are presented in Figs. \ref{fig:estimation-errors}, \ref{fig:3D-traj}, and \ref{fig:Accel-gyro-bias-error}. One can observe that the velocity, gravity, roll, pitch and IMU-biases estimation errors converge to the vicinity of zero within a few seconds, while the yaw error approaches a constant value due to its inherent unobservability. The aligned estimated position trajectories closely follow the ground-truth trajectory. Overall, under the same simulation conditions, the three proposed schemes exhibit similar performance.

\begin{algorithm}[htpb]
\caption{Monocular Mixed Bearing VIO (MBVIO)}
\label{alg:mono_mixed_observer}
\begin{algorithmic}[1]
\INPUT  Continuous IMU measurements $(\omega(t), a(t))$; intermittent visual measurements $b_{i,k}$, $i=1,\dots,N$ at times $\{t_k\}_{k\in\mathbb{N}_{>0}}$.
\OUTPUT Estimates $\hat{R}(t)$, $\hat{p}(t)$, $\hat{v}(t)$, $\hat{\eta}(t)$, $\forall t\ge 0$.
\FOR{$k\ge 1$}
  \WHILE{$t \in [t_{k-1}, t_k)$}
    \STATE $\dot{\hat{R}} = \hat{R}(\omega + \sigma_R)^\times$        
    \STATE $\dot{\hat{p}} = \hat{R}\hat{v}+ (\hat{R}\sigma_R)^\times \hat{p}$
    \STATE $\dot{\hat{v}} = -\omega^\times \hat{v} + \hat{\eta} + a$
    \STATE $\dot{\hat{\eta}} = -\omega^\times \hat{\eta}$
    \STATE $\dot{P} = A(t)P + P A^\top(t) + V(t)$                \ALGCOMMENT{$A(t)$ is defined in \eqref{eq:ABC_matrices_mixed_bearings} and $V(t)$ being uniformly positive definite.}
  \ENDWHILE
  \STATE Read the bearing measurement $b_{i,k}, \quad i= 1,\dots,N$ 
  \STATE Construct the virtual output $\mathbf{y}_{r,k}$ \ALGCOMMENT{$\mathbf{y}_r$ is defined in \eqref{eq:LTV_multi_bearing}}
  \STATE $z_{r,k} = \mathbf{y}_{r,k} - C_k\hat{\mathbf{x}}_{r,k} ^-$   
  \STATE $K_{r,k} = P_k ^- C^\top_k (C_k P_k^- C_k^\top + Q_k)^{-1}$    \ALGCOMMENT{$C$ is defined in \eqref{eq:ABC_matrices_mixed_bearings} and $Q(t)$ being uniformly positive definite.}
  \STATE $\sigma_{r,k} = K_{r,k} z_{r,k}$
  \STATE Extract $\sigma_{v,k},\sigma_{\eta,k}$ from $\sigma_{r,k}$ 
  \STATE $\hat{R}^{+} _k = \hat{R}^-_k$
  \STATE $\hat{p}^{+} _k= \hat{p}^-_k$
  \STATE $\hat{v}^{+}_k= \hat{v}^-_k + \sigma_{v,k}$
  \STATE $\hat{\eta}^{+} _k= \hat{\eta}^-_k + \sigma_{\eta,k}$
  \STATE $P^+ _k= (I_{6} - K_kC_k) P^-_k$
\ENDFOR
\end{algorithmic}
\end{algorithm}


\subsection{Interpretation of $Q$ and $V$ in terms of noise characteristics}
    Although the proposed observers are designed within a deterministic framework, the matrices $V(t)$ and $Q(t)$ appearing in the Riccati equation can be given a meaningful interpretation in terms of the noise characteristics of an associated stochastic system. Let the noisy gyroscope and accelerometer measurements be modeled as 
    \begin{equation}
        \omega_m = \omega + n_\omega, \qquad a_m= a+ n_a
    \end{equation}
    where $n_\omega, n_a \in \mathbb{R}^3$ are zero-mean white Gaussian noise signals, and let $n_x := [n_\omega^\top \ n_a^\top]^\top$. Their covariance matrices are given by \begin{equation*}
        \operatorname{Cov}(n_\omega) = \sigma_\omega ^2 I_3, \qquad \operatorname{Cov}(n_a) = \sigma_a ^2 I_3
    \end{equation*}
    such that $ \operatorname{Cov}(n_x) = diag(\sigma_\omega^2 I_3 , \sigma_a^2 I_3)$, where $\sigma_\omega$ and $\sigma_a$ denote the standard deviations of the gyroscope and accelerometer noise, respectively.
    
    For each estimation scheme, the state estimation error dynamics can be written in the form
    \begin{equation}
        \dot{\tilde{x}} = A(t) \tilde{x} + G_\star(t) n_x,
    \end{equation}
    where $A(t)$ is defined in \eqref{eq:ABC_matrices_multiple_bearings} for MVIO, in \eqref{eq:ABC_matrices_mixed_bearings} for MBVIO, and in \eqref{eq:A_phi_matrices_reduced_order_and_bias} for MBVIO-B.  The process noise covariance matrix is accordingly selected as 
    \begin{equation}
        V(t) = G_\star(t) \operatorname{Cov}(n_x) G_\star^ \top(t),
    \end{equation}
    where the matrix $G_\star(t)$ is given, for the MVIO, MBVIO, and MBVIO-B, respectively, as follows: 
    \[
    G(t) = \begin{bmatrix}
            -\hat{v}^\times & -I_3 \\
             -\hat{\eta}^\times & 0_3 \\
             -\hat{Y} & 0_{3N \times 3} \\
        \end{bmatrix}
    \]   
    \[
    \begin{aligned}
        G_r(t) &= \begin{bmatrix}
            -\hat{v}^\times & -I_3 \\
             -\hat{\eta}^\times & 0_3 \\ 
        \end{bmatrix}, \qquad
                G_b(t) = \begin{bmatrix}
            -\hat{v}^\times & -I_3 \\
             -\hat{\eta}^\times & 0_3 \\
             0_{6\times3} & 0_{6\times3}
        \end{bmatrix}
    \end{aligned}      
    \]
    where $\hat{Y} = \left[(\hat{y}_1^\times) ^\top ~ \dots ~ (\hat{y}_N ^ \times ) ^ \top\right]^\top$. \\
To determine the $Q$ matrix, only the contribution of the gyroscope noise to the output is considered, while the noise affecting the bearing measurements is neglected. Since in the MVIO observer, the output is not affected by the gyro noise, we pick $Q=10^{-4}I$. For the MBVIO and MBVIO-B, we determine $Q(t)$ as follows.\\
For the MBVIO, define  $\omega_{c,m}:= R_c ^\top \omega_m$, and  
    \begin{equation}
    \begin{aligned}
          q_{i,m} &:= b_i \times (\dot{b}_i + \omega_{c,m} ^\times b_i), \qquad M_m:= \sum_{i=1}^N \gamma_i q_{i,m}q_{i,m}^\top, \\
          \bar{M}_m &= R_c M_m R_c ^\top, \qquad \hat{v}_{c,m}:= R_c^\top(\hat{v} + \omega_m ^\times p_c).
    \end{aligned} 
    \end{equation}
 The measurement residual can therefore be written as 
\begin{equation}
    r = y_m - C(t) \hat{x} \approx C(t) \tilde{x} + H(t) n_\omega,
\end{equation}
    \begin{equation}
        H(t) = \bar{M}_m p_c^\times - R_c \bigl(\sum_{i=1} ^N \gamma_i q_{i,m} \hat{v}_{c,m} ^\top \pi(b_i)\bigr) R_c^\top.
    \end{equation}
For the MBVIO-B estimation scheme,  define $\hat{\omega}_{c,m}:= R_c^\top(\omega_m - \hat{b} _g)$ and 
       \begin{equation}
    \begin{aligned}
          \hat{q}_{i,m} &:= b_i \times (\dot{b}_i +\hat{\omega}_{c,m} ^\times b_i), \qquad \hat{v}_{c,m}:= R_c^\top(\hat{v} + \hat{\omega}_m ^\times p_c).
    \end{aligned} 
    \end{equation}
 Then,  
    \begin{equation}
        H(t) = \begin{bmatrix}
            \hat{q}_{1,m} ^\top R_c^\top p_c^\times - \hat{v}_{c,m}^\top \pi(b_1) R_c ^ \top \\
            \vdots\\
            \hat{q}_{N,m} ^\top R_c^\top p_c^\times - \hat{v}_{c,m}^\top \pi(b_N) R_c ^ \top,
        \end{bmatrix}
    \end{equation}
The measurement noise covariance matrices for the MBVIO and MBVIO-B estimation schemes are taken as follows:
    \begin{equation}
        Q_{y}(t) = H(t) \operatorname{Cov}(n_\omega)H^\top (t), \qquad Q(t) = Q_{y}^{-1}(t).
    \end{equation}  

\begin{figure*}[htbp]
    \centering
    \subfloat[\label{fig:position-error}]{%
    \includegraphics[width=0.32\textwidth]{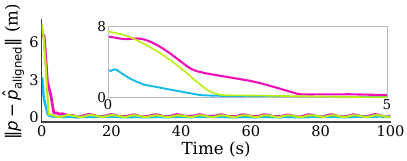}}
    \hfill
    \subfloat[\label{fig:velocity-error}]{%
    \includegraphics[width=0.32\textwidth]{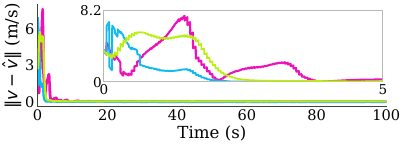}}
    \hfill
     \subfloat[\label{fig:eta-error}]{%
    \includegraphics[width=0.32\textwidth]{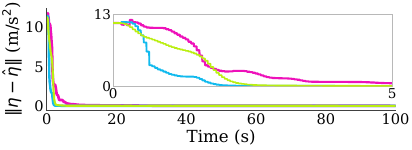}}\\[-0.1em]
     \subfloat[\label{fig:roll-error}]{%
    \includegraphics[width=0.32\textwidth]{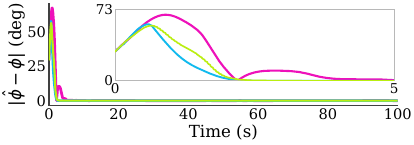}}
    \hfill
    \subfloat[\label{fig:yaw-error}]{%
    \includegraphics[width=0.32\textwidth]{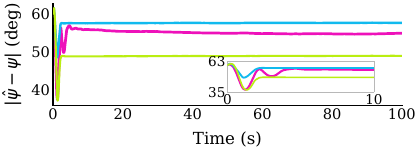}}
    \hfill
    \subfloat[\label{fig:pitch-error}]{%
    \includegraphics[width=0.32\textwidth]{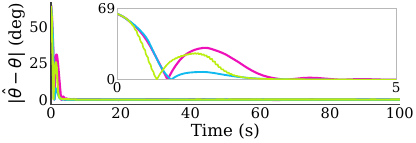}}
    \caption{Estimation errors of MVIO, MBVIO, and MBVIO-B:
    (a) aligned position, (b) body-frame velocity, (c) body-frame gravity, (d) roll (e) yaw, and (f) pitch errors.
    Green, blue, and magenta denote MVIO, MBVIO, and MBVIO-B, respectively.}
    \label{fig:estimation-errors}
\end{figure*}

\begin{figure}[htbp]
    \centering
    \includegraphics[width=\columnwidth]{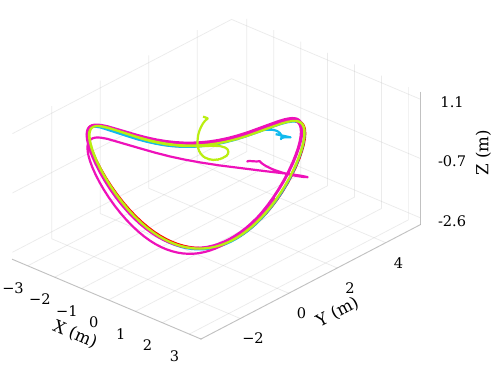}
    \caption{Ground-truth (red) and aligned trajectories for MVIO (green), MBVIO (blue), and MBVIO-B (magenta)}
    \label{fig:3D-traj}
\end{figure}
\begin{figure}[htbp]
    \centering
    \subfloat[\label{fig:accel-bias-error}]{%
    \includegraphics[width=0.485\columnwidth]{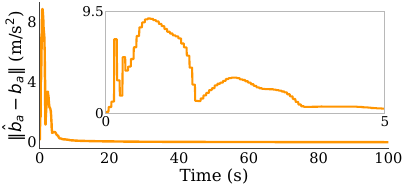}}
    \subfloat[\label{fig:accel-gyro-error}]{%
    \includegraphics[width=0.485\columnwidth]{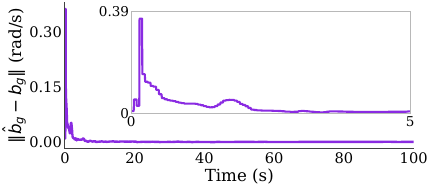}}
    \caption{Accelerometer and gyroscope bias estimation errors for MVIO-B}
    \label{fig:Accel-gyro-bias-error}
\end{figure}

\section{Conclusion}
In this paper, we proposed three Riccati-based state observers, relying on IMU and monocular bearing measurements. The first observer estimates the body-frame linear velocity, gravity vector and landmarks with UGES guarantees. The second reduced observer estimates the body-frame linear velocity and gravity vector with UGES guarantees. The third observer estimates the IMU biases as well as the body-frame linear velocity and gravity vector, with local exponential stability guarantees. These estimates are subsequently used to recover the orientation almost globally, up to an unknown constant yaw offset, and the position up to an unknown constant translation. The key ingredients in the design of the reduced-order observers MBVIO and MBVIO-B are the introduction of the body-frame gravity vector as a new state to decouple the body-frame velocity from rotational dynamics, and the derivation of an appropriate output equation that relates the angular velocity, bearing measurements and their time derivatives to the body-frame linear velocity.\\

The price to pay for the derivation of the reduced-order observers (MBVIO and MBVIO-B) is the use of the bearings time-derivatives in the output equation. In practice, these quantities must be obtained from optical flow measurements, and are therefore subject to image noise and tracking errors.
Future work will focus on evaluating the proposed observers on visual-inertial datasets, and on addressing the practical limitations by investigating strategies to overcome the persistency of excitation issues.


\begin{appendix}\label{Appendix}

\subsection{Proof of Proposition \ref{Proposition1}}\label{proof_uni_obs}

This proof is motivated by the proof of \cite[Lemma 1]{wangACC25}.
First, note that $A(t)$ and $C(t)$ given in~\eqref{eq:ABC_matrices_mixed_bearings} are continuous and uniformly bounded as long as $\omega(t)$ and $\dot{b}_i(t)$, $i=1,\hdots, N$, are continuous and uniformly bounded. Now we proceed to show that \eqref{UO} is satisfied as long as \eqref{PE_cond} is satisfied.
We rewrite the matrix $A(t)$ as  $  A(t) = \bar{A} - \Omega(t)$ with
\begin{equation}
    \bar{A} := \begin{bmatrix}
        0_{3\times3} & I_3 \\
        0_{3\times3} & 0_{3\times3}
    \end{bmatrix},
    \quad \Omega(t) := \blkdiag(\omega^\times (t), \omega^\times (t)).
\end{equation}
First, we show that the explicit form of the state-transition matrix associated with $A(t)$ is given by
\begin{equation}\label{eq:state-transition_prop2}
    {\Phi}(t,\tau) = \mathcal{R}(t)\bar{\Phi}(t,\tau) \mathcal{R}^\top (\tau),
\end{equation}
 where $\bar{\Phi}(t,\tau):= \exp{(\bar{A}(t-\tau))}$ is the state-transition matrix associated with $\bar{A}$, and  the block diagonal matrix $\mathcal{R}(t):= \blkdiag(R^\top, R^\top)$.

 Taking the time derivative of \eqref{eq:state-transition_prop2}, one obtains
\begin{equation}
     \frac{d}{dt} \Phi(t,\tau) = \dot{\mathcal{R}}(t) \bar{\Phi}(t,\tau) \mathcal{R}^\top (\tau) + \mathcal{R}(t) \bar{A} \bar{\Phi}(t,\tau) \mathcal{R}^\top(\tau).
 \end{equation}
 Using the facts $\dot{\mathcal{R}}(t) = - \Omega(t)\mathcal{R}(t)$ and $\mathcal{R}(t) \bar{A} = \bar{A}\mathcal{R}(t)$, one gets
 \begin{equation}
    \frac{d}{dt} \Phi(t,\tau) = A(t) \Phi(t,\tau).
 \end{equation}
Moreover, $\Phi(t,\tau)$ satisfies $\Phi(t,t)=I_{6}$, $\Phi(t_3,t_2)\,\Phi(t_2,t_1)=\Phi(t_3,t_1)$ for all $t_1,t_2,t_3\ge 0$, and  $\Phi(\tau,t)=\Phi(t,\tau)^{-1}$. Consequently, $\Phi(t,\tau)$ defined in \eqref{eq:state-transition_prop2} is the state-transition matrix associated with $A(t)$.

Substituting \eqref{eq:state-transition_prop2} in \eqref{obs_gram}, one has
\begin{equation}
    W_0 (t,t+\delta) =
       \mathcal{R}(t) \bar{W}_0  (t,t+\delta)\mathcal{R}^\top (t),
\end{equation}
where $\bar{W}_0(t,t+\delta):= \int_t ^ {t+\delta} \bar{\Phi}^\top (\tau,t) \bar{C}^\top(\tau) \bar{C}(\tau) \bar{\Phi}(\tau,t) d\tau$, and $\bar{C}(t):=R(t)C(t)\mathcal{R}(t)$. Notice that $\bar{W}_0$ is the observability Gramian associated with the pair $(\bar{A}, \bar{C}(t))$.

It is clear that if the pair $(\bar{A}, \bar{C}(t))$ is uniformly observable, \ie, there exist $\bar{\delta},\bar{\mu}>0$, such that for all $t\geq 0$, $\bar{W}_0(t,t+\bar{\delta})\geq \bar{\mu}I_6$, then for any $\delta\geq \bar{\delta}$, and $\mu\in(0,\bar{\mu}]  $, one has 
\begin{equation}
    {W}_0(t,t+{\delta})\geq \bar{\mu}\mathcal{R}(t)\mathcal{R}^\top(t)\geq \mu I_6.
\end{equation}

Consequently, the uniform observability of the pair $(\bar{A}, \bar{C}(t))$ implies the uniform observability of the pair $(A(t), C(t))$. Hence, it is sufficient to show that the pair  $(\bar{A}, \bar{C}(t))$ is uniformly observable.

Now, we prove the uniform observability of the pair $(\bar{A}, \bar{C}(t))$.
 We now establish the uniform observability of the pair $(\bar{A}, \bar{C}(t))$ using lemma \cite[Lemma 1]{NL_observer_V-INS}. Under the assumptions of Proposition \ref{Proposition1}, $\bar{C}(t)$ is continuous and bounded. Moreover, $\bar{A}$ is constant and satisfies $\bar{A}^2 = 0$, hence $\bar{A}$ is nilpotent of index two, and all its eigenvalues are zero. Therefore, the conditions required to apply lemma \cite[Lemma 1]{NL_observer_V-INS} are satisfied. It is then sufficient to show that there exist strictly positive constants $\bar{\mu}, \bar{\delta}$ such that, for all $t\geq0$
\begin{equation}\label{OO}
     \int _t ^ {t+\bar{\delta}} \mathcal{O}^\top(\tau)\mathcal{O}(\tau) d\tau \geq \bar{\mu}I_6,
\end{equation}
where 
\[
\mathcal{O}(t):= [ \bar{C}^\top(t), (\bar{C}(t) \bar{A})^\top ] ^\top.
\]

Using \eqref{eq:ABC_matrices_mixed_bearings}, $\bar{C}(t)$ can be expressed as $\bar{C}(t) = [\tilde{M}(t), 0_{3\times3}]$, where
$ \tilde{M}(t)= R(t) R_c M(t) R_c ^\top R^\top (t)= \sum _{i=1}^ N \gamma_i \bar{q}_i\bar{q}_i ^\top.$ 
It follows that
\begin{equation}
\begin{aligned}
     \mathcal{O}^\top(t) \mathcal{O}(t) &= \begin{bmatrix}
        \tilde{M}^\top (t) \tilde{M}(t) & 0_{3\times3}\\
        0_{3\times3} & \tilde{M}^\top (t) \tilde{M}(t)
    \end{bmatrix} 
\end{aligned}
\end{equation}

Integrating on both sides with respect to time over $[t,t+\bar{\delta}]$, one obtains
\begin{equation} 
\begin{aligned}
     &\int _t ^ {t+\bar{\delta}} \mathcal{O}^\top (\tau) \mathcal{O} (\tau) d\tau \\ &= \begin{bmatrix}
        \int _t ^ {t+\bar{\delta} } \tilde{M}(\tau)^\top \tilde{M}(\tau) d\tau & 0_{3\times3} \\
        0_{3\times3} & \int _t ^ {t+\bar{\delta} } \tilde{M}(\tau)^\top \tilde{M}(\tau)d\tau
    \end{bmatrix}
\end{aligned}
\end{equation}
Assuming that condition \eqref{PE_cond} holds, and using Cauchy-Schwarz inequality, one has 
\begin{equation}
    \int _t ^{t+\delta_M} \tilde{M}^\top(\tau) \tilde{M}(\tau) d \tau \geq \frac{\mu_M^2}{\delta_M} I_3.
\end{equation}

Therefore, \eqref{OO} holds with $\bar{\delta}=\delta_M$ and $\bar{\mu}=\frac{\mu_M ^2}{\delta_M}$.
Consequently, the pair $(\bar{A}, \bar{C}(t))$ is uniformly observable. Hence, the uniform observability of the pair $(A(t), C(t))$ follows.

\subsection{Proof of Proposition \ref{Prop_suff_cond_excit} } \label{proof_prop_suff_cond}
 In view of \eqref{q_i} and the fact that $\bar{q}_i(t):= R(t)R_cq_i(t)$, one has $\bar{q}_i=-\frac{1}{l_i}(\bar{R}b_i \times \bar{R} v_c)$, with $\bar{R}(t)=R(t)R_c$. Then, condition \eqref{PE_cond} can be rewritten as
\begin{equation}\label{PE_cond_modified}
        \int _t ^ {t+\delta_M} (\bar{R}v_c^\times) \left( \sum _{i=1} ^ N  \frac{\gamma_i}{l_i^2} b_i b_i^\top \right)(\bar{R}v_c^\times)^\top d\tau \geq \mu_MI_3.
     \end{equation}
Assume that there exists $\beta>0$ such that 
\begin{equation}\label{bearing_excitation}
\pi(u_v)\left(\sum _{i=1} ^ N  \frac{\gamma_i}{l_i^2} b_i(t) b_i^\top(t)\right) \pi(u_v) \geq \beta \pi(u_v), ~\forall t\geq0,
\end{equation}
where $u_v=\frac{v_c}{\|v_c\|}$ whenever $v_c\not=0$. Since $v_c^\times u_v=0$, one has $v_c^\times\pi(u_v)=v_c^\times$ and hence, 
\begin{equation}\label{PE_cond_modified2}
\begin{aligned}
& \int _t ^ {t+\delta_M} (\bar{R}v_c^\times) \left( \sum _{i=1} ^ N  \frac{\gamma_i}{l_i^2} b_i b_i^\top \right)(\bar{R}v_c^\times)^\top d\tau \\
&=\int _t ^ {t+\delta_M} (\bar{R}v_c^\times) \pi(u_v) \left( \sum _{i=1} ^ N  \frac{\gamma_i}{l_i^2} b_i b_i^\top \right)\pi(u_v)(\bar{R}v_c^\times)^\top d\tau \\
&\geq -\beta \int _t ^ {t+\delta_M} \bar{R}(v_c^\times)^2 \bar{R}^\top d\tau=\beta \int _t ^ {t+\delta_M} (\|\bar{v}_c\|^2 I- \bar{v}_c\bar{v}_c^\top) d\tau \\
 \end{aligned}
\end{equation}
with $\bar{v}_c=\bar{R}v_c$. 
Assume that there exists $\zeta>0$ such that 
\begin{equation}\label{excitation_v}
\int _t ^ {t+\delta_M} (\|\bar{v}_c\|^2 I- \bar{v}_c\bar{v}_c^\top) d\tau \geq \zeta I, ~\forall t\geq 0,
\end{equation}
is satisfied as well as \eqref{bearing_excitation}, then condition \eqref{PE_cond} is satisfied with $\mu_M=\beta \zeta $.\\
Now, let us look at condition \eqref{excitation_v}. For any unit-vector $z$, one has 
\begin{equation}\label{excitation_v2}
 \begin{aligned}
&z^\top \left(\int _t ^ {t+\delta_M} (\|\bar{v}_c\|^2 I- \bar{v}_c\bar{v}_c^\top) d\tau\right)z\\
&=\int _t ^ {t+\delta_M} ( \|\bar{v}_c\|^2- (z^\top \bar{v}_c)^2) d \tau=\int _t ^ {t+\delta_M}\|z \times \bar{v}_c(\tau)\|^2 d \tau.
 \end{aligned}
\end{equation}
Consequently, condition \eqref{excitation_v} is satisfied as long as $\bar{v}_c$ does not remain aligned with a fixed direction throughout an interval of length $\delta_M$.\\
Now let us look at condition \eqref{bearing_excitation}. Let $\bar{z}$ be a unit vector orthogonal to $u_v$, that is $\pi(u_v)\bar{z}=\bar{z}$ and consequently, for every $t$ such that $v_c(t)\not=0$, condition \eqref{bearing_excitation} becomes
\begin{equation}
\sum _{i=1} ^ N  \frac{\gamma_i}{l_i^2} (b_i^\top \bar{z}(t))^2 \geq \beta , ~\forall t\geq0,
\end{equation}
which is satisfied as long as no direction in the plane orthogonal to $u_v$ (consequently orthogonal to $v_c$) can be orthogonal to all bearings $b_i$, $i=1,\dots,N$.





\end{appendix}

\bibliographystyle{IEEEtran}
\bibliography{reference}
\end{document}